\newtheorem{lemma}{Lemma}
\newtheorem{theorem}{Theorem}
\newtheorem{axiom}{Axiom}
\newcommand{\sol}{MANA}
\newcommand{\FINAL}{no}
\newcommand{\finalVersion}[1]{{\color{red} {\bf FINAL VERSION:} #1}}
\newcommand{\finalVersion}[1]{}
\title{MANA for MPI: MPI-Agnostic Network-Agnostic Transparent Checkpointing}
\date{}
\author{Rohan Garg \\
  Northeastern Univ. \\
  Boston, MA, USA \\
  \url{rohgarg@ccs.neu.edu}
\and
  Gregory Price  \\
  Raytheon Company \\
  Annapolis Junction, MD, USA \\
  \url{gregory.m.price@raytheon.com}
\and
  Gene Cooperman \\
  Northeastern Univ. \\
  Boston, MA, USA \\
  \url{gene@ccs.neu.edu}
}
\begin{document}
\maketitle

\begin{abstract}
Transparently checkpointing MPI for fault tolerance and load balancing is
a long-standing problem in HPC.  The problem has been complicated by the
need to provide checkpoint-restart services for all combinations of an
MPI implementation over all network interconnects.  This work presents
\sol{} (MPI-Agnostic Network-Agnostic transparent checkpointing), a
single code base which supports all MPI implementation and interconnect
combinations.  The agnostic properties imply that one can checkpoint
an MPI application under one MPI implementation and perhaps over TCP,
and then restart under a second MPI implementation over InfiniBand on
a cluster with a different number of CPU cores per node.  This technique
is based on a novel {\em split-process} approach, which enables two separate
programs to
co-exist within a single process with a single address space.
This work overcomes the limitations of the two most widely adopted
transparent checkpointing solutions, BLCR and DMTCP/InfiniBand,
which require separate modifications to each MPI implementation and/or
underlying network API.  The runtime overhead is found to be
insignificant both for checkpoint-restart within a single host, and
when comparing a local MPI computation that was migrated to a remote
cluster against an ordinary MPI computation running natively on that
same remote cluster.
\end{abstract}

\section{Introduction}
\label{sec:intro}
The use of transparent or system-level checkpointing for MPI is
facing a crisis today.  The most common transparent checkpointing packages
for MPI in recent history are either declining in usage, or abandoned
entirely.  These checkpointing packages include:
the Open~MPI~\cite{OpenMPICheckpoint07} checkpoint-restart service,
the MVAPICH2~\cite{GaoEtAl06} checkpoint-restart service,
DMTCP for MPI~\cite{ansel2009dmtcp}, MPICH-V~\cite{Bouteiller06}, and 
a fault-tolerant BLCR-based ``backplane'', CIFTS~\cite{GuptaEtAl09}.
We argue existing transparent or system-level checkpoint approaches
share common issues that makes long-term maintenance impractical.
In particular, the HPC community requires checkpoint-restart support
for any of $m$~popular MPI implementations over $n$~different
network interconnects.

We propose MPI-Agnostic Network-Agnostic transparent checkpointing
(\sol{}), a single code base that can support all combinations
of the many MPI implementations and network libraries that are in wide use.
In particular, it supports all $m\times n$ combinations, where
$m$ is the number of MPI implementations and
$n$ is the number of underlying network libraries.
The new approach, based on a {\em split-process}, is fully transparent
to the underlying MPI, network, libc library, and underlying Linux kernel.
\sol{} is free and open-source software~\cite{manaGithub}.  (Transparent checkpointing supports
standard system-level checkpointing, but it can alternatively be customized in
an application-specific manner.)

We begin by distinguishing this work from that of Hursey
et~al.~\cite{hursey2009interconnect}, which demonstrated a
network-agnostic implementation of checkpointing for a single MPI
implementation (for Open~MPI).  Hursey's work adds network-agnostic
checkpointing by ``taking down'' the network during checkpoint and
``building it up'' upon resuming --- but only within the single Open~MPI
implementation.  Further, it requires maintenance of $n$~code bases,
where $n$ is the number of network libraries that Open~MPI supports.

In contrast, The new approach of \sol{} employs a single code base
that exists {\em external to any particular MPI implementation}.
At checkpoint time, \sol{} ``disconnects'' the application from the MPI
and network libraries, and at the time of resuming, it ``reconnects''
to the MPI and network libraries.  Further, at the time of restart,
\sol{} starts a new and independent MPI session (even possibly using
a newer version of the original MPI implementation (perhaps due to a
system upgrade), or it may even switch MPI implementations.

Next, we present three case studies to demonstrate the declining usage of
transparent checkpointing, and how maintenance costs have factored into
supressing adoption of even semi-agnostic systems.

First, we consider Hursey et al.~\cite{hursey2009interconnect}
and Open-MPI.  Open~MPI developers created a novel and elegant
checkpoint-restart service that was {\em network-agnostic}, with the
ability to checkpoint under network~A and restart under network~B.
The mechanism presented by Hursey et al warrants careful analysis
of how it provides {\em network-agnostic} checkpointing (a primary
goal of \sol{}).  Their implementation lifts checkpoint code out of
interconnect drivers, applying a Chandy/Lamport~\cite{ChandyLamport}
checkpoint algorithm to an abstraction of ``MPI Messages''.

However, their choice to implement within the MPI library may have
ultimately suppressed widespread adoption.  Hursey et al\hbox{.} can
support multiple MPI implementations only by requiring
each MPI implementation to individually integrate the Hursey approach
and maintain support for taking down and restoring network
interconnects for each supported network library.
This imposes a significant maintenance penalty on packages maintainers.
Even Open-MPI support remains in question.  As of this writing in 2019,
the Open-MPI FAQ says: ``Note: The checkpoint/restart support was last
released as part of the v1.6 series. $\ldots$ This feature is looking
for a maintainer.''~\cite{openmpiCrService}.

The second case study concerns BLCR.  MPICH and several other MPI
implementations adopted BLCR for checkpointing. BLCR is based on a kernel
module that checkpoints the local MPI rank.  BLCR lacks System~V shared memory
support (widely used for intra-node communication), which severely limits its
use in practice.  As of this writing, BLCR~0.8.5 (appearing in~2013) was
the last officially supported version~\cite{blcrDownloads}, and formal
testing of the BLCR kernel module stopped with Linux~3.7
(Dec.,~2012)~\cite{blcrKernelTesting}.  Here, again, we argue that BLCR
declined not due to any fault with BLCR, but due to the difficulty of
supporting or maintaining common interconnects.

The third case study concerns DMTCP.  As discussed above,
DMTCP/InfiniBand is {\em MPI-agnostic}, but not {\em network-agnostic},
requiring a plugin for each network interconnect.  While it supports InfiniBand
~\cite{cao2014transparent}, and partially supports Intel
Omni-Path~~\cite[Chapter~6]{caophdthesis}, DMTCP does not support Cray GNI Aries
network, the Mellanox extensions to InfiniBand (UCX and FCA), the libfabric
API~\cite{libfabric}, and many others.

Separate {\em MPI-agnostic} or {\em network-agnostic} checkpoint systems have
not been widely adopted, it seems, in-part due to the maintenance costs.
\sol{} attempts to resolve this.

The {\em split-process} approach eliminates this
maintenance penalty, supporting checkpoint/restart on all
combinations of MPI library and interconnect with a single codebase.  In a
{\em split-process}, a single system process contains two programs in its
memory address space.  The two programs are an MPI proxy application
(denoted the lower-half) and the original MPI application code (denoted
the upper-half).  \sol{} tracks which memory regions belong to the upper and
lower halves.  At checkpoint time, only upper-half memory regions are saved.
\sol{} is also fully transparent to the specific MPI implementation, network,
libc library and Linux kernel.

At restart time, \sol{} initializes a new MPI library and underlying
interconnect network in the lower half of a process.  The checkpointed
MPI application code and data is then copied in and restored into the upper half
from the checkpoint image file.
By initializing a new MPI library at the time of restart, \sol{}
provides excellent load-balancing support without the need for additional
logic.  The fresh initialization inherently detects the correct number of
CPU cores per node, optimizes the topology as MPI ranks from the same node
may now be split among distinct nodes (or vice verse), re-optimizes the
rank-to-host bindings for any MPI topology declarations in the MPI application,
and so on. (See Section~\ref{sec:futureWork} for further discussion.)

\sol{} maintains low runtime overhead, avoiding RPC by taking advantage
of a split-process to make library calls within the same process.  Other
Proxy-based approaches had been previously used for checkpointing in general
applications~\cite{zandy1999process} and CUDA (GPU-accelerated)
applications~\cite{garg2018crum,gtc2016crcuda,takizawa2011checl}.
However, such approaches incur significant overhead due to context switching
and copying buffers between the MPI application and the MPI proxy.

Scalability is a key criterion for \sol{}.  This criterion motivates
the use of version~3.0 of DMTCP~\cite{ansel2009dmtcp} as the
underlying package for transparent checkpointing.  DMTCP was previously
used to demonstrate petascale-level transparent checkpointing using Lustre.
In particular this was applied to MPI-based HPCG over 32,752 CPU
cores (checkpointing an aggregate 38~TB in 11~minutes), and to
MPI-based NAMD over 16,368 cores (checkpointing an aggregate 10~TB in
2.6~minutes)~\cite{cao2016system}.  DMTCP employs a stateless
centralized checkpoint coordinator for its results, and \sol{} re-uses
for its own purposes this same coordinator.  (The single coordinator
is not a barrier to scalability, since it can use a broadcast
tree for communication with its peers.)

Next in this work, Section~\ref{sec:design} describes the design issues in
fitting the split-process concept to checkpoint MPI.  In particular, the
issue of checkpointing during MPI collective communications is discussed.
Section~\ref{sec:evaluation} presents an experimental evaluation.
Section~\ref{sec:futureWork} discusses current and future inquiries opened
up by these new ideas.
Section~\ref{sec:relatedWork} discusses related work.
Finally, Section~\ref{sec:conclusion} presents the conclusion.

\section{\sol{}: Design and Implementation}
\label{sec:design}

Multiple aspects of the design of \sol{} are covered in this section.
Section~\ref{sec:splitProcesses} discusses the design for supporting
a split-process.
Section~\ref{sec:mpiIds} discusses the need to save and restore
persistent MPI opaque objects, such as communicators, groups and topologies.
Section~\ref{sec:pointToPoint} briefly discusses the commonly used
algorithm to drain point-to-point MPI messages in transit prior to intiaiting a
checkpoint.
Sections~\ref{sec:mpiCollectivesOverview} and~\ref{sec:mpiCollectivesAlgo}
present a new two-phase algorithm (Algorithm~\ref{algo:mpi2pcSimple}),
which enables checkpointing in-progress MPI collective communication
calls in a fully agnostic environment.
Finally, Sections~\ref{sec:implementation} and~\ref{sec:crpackage}
present details of the overall implementation of \sol{}.

\subsection{Upper and Lower Half: Checkpointing with an Ephemeral MPI Library}
\label{sec:splitProcesses}

In this section, we define the {\em lower half} of a split-process as
the memory associated with the MPI library and dependencies,
including network libraries.  The {\em upper half} is the remaining
Linux process memory associated with the MPI application's code, data,
stack, and other regions (e.g., environment variables).  The terms
{\em lower half} and {\em upper half} are in analogy with the upper
and lower half of a device driver in an operating system kernel.
This separation into lower and upper half does not involve additional
threads or processes. Instead, it serves primarily to tag memory so that only
upper half memory will be saved or restored during checkpoint and restart.
Section~\ref{sec:implementation} describes an additional ``helper
thread'', but that thread is active only during checkpoint and restart.

Libc and other system libraries may appear in both the lower half as a
dependency of the MPI libraries, and the upper half as an independent
dependency of the MPI application.

This split-process approach allows \sol{} to balance two
conflicting objectives: a shared address space; and isolation of
upper and lower halves.  The isolation allows \sol{} to omit the lower
half memory (an ``ephemeral'' MPI library)
when it creates a checkpoint image file.  The shared
address space allows the flow of control to pass efficiently from
the upper-half MPI application to the lower-half MPI library through
standard C/Fortran calling conventions, including call by reference.
As previously noted, Remote Produce Calls (RPC) are not employed.

{\em Isolation} is needed so that at checkpoint time,
the lower half can be omitted from the checkpoint image, and
at the time of restart, replaced with a small ``bootstrap'' MPI program
with new MPI libraries.  The bootstrap program calls {\tt MPI\_Init()} and
each MPI process discovers its MPI rank via a call to {\tt MPI\_Rank()}.
The memory present at this time becomes the lower half.
The MPI process then restores the upper-half memory from a checkpoint
image file corresponding to the MPI rank id.  Control is then
transferred back to the upper-half MPI application, and the stack
in the lower half is never used again.

{\em Shared address space} is needed for efficiency. A dual-process
proxy approach was explored in~\cite[Section~IV.B]{garg2018crum} and
in~\cite[Section~IV.A]{takizawa2011checl}. The former work reported
a 6\%~runtime overhead for real-world CUDA applications, and the
latter work reported runtime overheads in excess of 20\% for some
OpenCL examples from the NVIDIA SDK~3.0.  In contrast, Section~\ref{sec:evaluation}
reports runtime overheads less than~2\% for \sol{} under older Linux
kernels, and less than~1\% runtime overhead for recent Linux kernels.

Discarding the lower half greatly simplifies the task of checkpointing. By
discarding the lower half, the MPI application in the upper half
appears as an isolated process with no inter-process communication.
Therefore, a single-process checkpointing package can create a
checkpoint image.

A minor inconvenience of this split-process approach is that calls to
{\tt sbrk()} will cause the kernel to extend the process heap in the data
segment. Calls to {\tt sbrk()} can be caused by invocations of {\tt malloc()}.
Since the kernel has no concept of a split-process, the kernel may choose, for
example, to extend the lower half data segment after restart since that
corresponds to the original program seen by the kernel before the upper-half
memory is restored.  \sol{} resolves this by interposing on calls to
{\tt sbrk()} in the upper-half libc, and then inserts calls to {\tt mmap()}
to extend the heap of the upper-half.

Finally, \sol{} employs coordinated checkpointing, and a checkpoint
coordinator sends messages to each MPI rank at the time of checkpoint (see
Sections~\ref{sec:pointToPoint},~\ref{sec:mpiCollectivesOverview}
and~\ref{sec:mpiCollectivesAlgo}).  MPI opaque objects (communicators,
groups, topologies) are detected on creation and restored on restart
(see Section~\ref{sec:mpiIds}).  This is part of a broader strategy by which
MPI calls with persistent effects (such as creation of these opaque objects)
are recorded during runtime and replayed on restart.

\subsection{Checkpointing MPI Communicators, Groups, and Topologies}
\label{sec:mpiIds}

An MPI application can create communication subgroups and topologies to
group processes for ease of programmability and efficient communication.
MPI implementations provide opaque handles to the application
as a reference to a communicator object or group.

\sol{} interposes on all calls that refer to these opaque identifiers,
and virtualizes the identifiers. At runtime, \sol{} records any MPI calls
that can modify the MPI communication state, such as \texttt{MPI\_Comm\_create},
\texttt{MPI\_Group\_incl}, etc.  On restart, \sol{} recreates the MPI
communicator state by replaying the MPI calls using a new MPI library.  The
runtime virtualization of identifiers allows the application to continue
running with consistent handles across checkpoint-restart.

A similar checkpointing strategy also works for other opaque identifiers, such
as, MPI derived datatypes, etc.

\subsection{Checkpointing MPI Point-to-Point Communication}
\label{sec:pointToPoint}

Capturing the state of MPI processes requires quiescing the process threads,
and preserving the process memory to a file on the disk. However, this alone
is not sufficient to capture a consistent state of the computation. Any MPI
messages sent but not yet received at the time of quiescing processes must
also be saved as part of the checkpoint image.

\sol{} employs a variation of an all-to-all bookmark exchange algorithm
to reach this consistent state.  LAM/MPI~\cite{CheckpointLAMMPI05b} demonstrated
the efficacy of a such a Chandy/Lamport~\cite{ChandyLamport} algorithm for
checkpointing MPI applications. Hursey et al.~\cite{hursey2009interconnect}
lifted this mechanism out of interconnect drivers and into the MPI library.
\sol{} further lifts this mechanism outside the MPI library, and into a
virtualized MPI API.

An early prototype of \sol{} demonstrated a na\"ive application of this bookmark
exchange algorithm was sufficient for handling pre-checkpoint draining for
point-to-point communication; however, collective-communication calls may have
MPI implementation effects that can determine when it is ``safe'' to begin
a checkpoint. For this reason, a na\"ive application to the entire API was
insufficient to ensure correctness.  This is discussed in
Section~\ref{sec:mpiCollectivesOverview}.

\subsection{Checkpointing MPI Collectives: Overview}
\label{sec:mpiCollectivesOverview}

The MPI collective communications primitive involves communication amongst all
or a program-defined subset of MPI ranks (as specified by the MPI communicator
argument to the function). The internal behavior of collectives are specific
to each MPI implementation, and so it is not possible to make guarantees about
their behavior, such as when and how messages are exchanged when ranks are
waiting for one or more ranks to enter the collective.

In prior work ~\cite{CheckpointLAMMPI05b,hursey2009interconnect}, internal
knowledge of the MPI library state was required to ensure that checkpointing
would occur at a ``safe'' state.  In particular, Hursey et al.
~\cite{hursey2009interconnect} required interconnect drivers be classified
as ``checkpoint-friendly'' or ``checkpoint-unfriendly'', changing behavior based
on this classification.  As \sol{} lives outside the MPI library,
a naive application of the Hursey et al. algorithm can have effects that cross
the upper and lower half boundaries of an MPI rank (for example, when shared
memory is being used for MPI communication).

This problem occurs because of the truly {\em network-agnostic} trait of \sol{}.
As \sol{} has no concept of transport level constructs, it cannot determine
what ``safe'' means in context of collectives. To correct this, \sol{}'s
support for collective communication requires it to maintain the following
invariant:
\begin{quote}
  {\em No checkpoint must take place while a rank is inside a
  collective communication call.}
\end{quote}

There exists one exception to this rule: a {\em trivial barrier}.
A {\em trivial barrier} is a simple call to MPI\_Barrier().  This call produces
no side effects on an MPI rank, and so it can be safely interrupted during
checkpoint, and then re-issued when restarting the MPI application.  This is
possible due to the split-process architecture of \sol{}, as
{\em trivial barrier} calls occur exclusively in the lower half, which
is discarded and replaced across checkpoint and restart.  \sol{}
leverages this exception to build a solution for all other collective calls.

As we discuss \sol{}'s algorithm for checkpointing collective calls,
we take into consideration three subtle, but important, concerns.
\begin{description}
\item[Challenge I (consistency):]
  In the case of a single MPI collective communication call, there is a
  danger that rank~A will see a request to checkpoint before entering
  the collective call, while rank~B will see the same request after
  entering the collective call, in violation
  of \sol{}'s invariant.  Both ranks might report that they are ready
  to checkpoint, and the resulting inconsistent snapshot will create problems
  during restart.  This situation could arise, for example, if the
  message from the checkpoint coordinator to rank~B is excessively
  delayed in the network.  To resolve this, \sol{} introduces
  a two-pass protocol in which the coordinator makes a request (sends
  an intend-to-checkpoint message), each MPI rank acknowledges with its
  current state, and finally the coordinator posts a checkpoint request
  (possibly preceded by extra iterations).
\item[Challenge II (progress and latency):]
  Given the aforementioned solution for consistency, long delays may occur
  before a checkpoint request can be serviced.
  It may be that rank~A has entered the
  barrier, and rank~B will require several hours to finish a
  task before entering the barrier.  Hence, the two-pass protocol may
  create unacceptable delays before a checkpoint can be taken.
  Algorithm~\ref{algo:mpi2pcSimple} addresses this by introducing a
  {\em trivial barrier} prior to the collective communication call.
  We refer to this as a {\em two-phase algorithm} since each collective call is
  now replaced by a wrapper function that invokes a trivial barrier call
  (phase~1) followed by the original collective call (phase~2).
\item[Challenge III (multiple collective calls):]
  Until now, it was assumed that at most one MPI collective communication
  call was in progress at the time of checkpoint.  It may happen that
  there are multiple ongoing collective calls.  During the time that
  some MPI ranks exit from a collective call, it may happen that MPI
  ranks associated with an independent collective call have left the MPI
  {\em trivial barrier} (phase~1) and have now entered the real collective call
  (phase~2).  As a result, servicing a checkpoint may be excessive delayed.
  To solve this, we introduce an intend-to-checkpoint message, such that no
  ranks will be allowed to enter phase~2, and extra iterations will be inserted
  into the request-acknowledge protocol between coordinator and MPI rank.
\end{description}

\hbox{\ \ }

\subsection{Checkpointing MPI Collectives: Detailed Algorithm}
\label{sec:mpiCollectivesAlgo}

Here we present a single algorithm (Algorithm~\ref{algo:mpi2pcSimple}) for
checkpointing MPI collectives which contains the elements described in
Section~\ref{sec:mpiCollectivesOverview}: a multi-iteration protocol;
and a two-phase algorithm incorporating a {\em trivial barrier} before any
collective communication call.

From the viewpoint of an MPI application, any call to an MPI
collective communication function is interposed on by a wrapper function,
as shown in Algorithm~\ref{algo:wrapper}.

\begin{algorithm}
  \begin{algorithmic}[1]
    \caption{Two-Phase collective communication wrapper.\\
        (This wrapper function interposes on all MPI collective communication
        functions invoked by an MPI application)}
    \label{algo:wrapper}
    \Function{Collective Communication Wrapper}{}
      \State{\Comment{Begin Phase~1}}
      \State{Call MPI\_Barrier()  \Comment{trivial barrier}}
      \State{\Comment{Begin Phase~2}}
      \State{Call original MPI collective communication function}
    \EndFunction
  \end{algorithmic}
\end{algorithm}

Recall that a {\em trivial barrier} is an extra call to {\tt
MPI\_Barrier()} prior to a collective call. A collective MPI call can intuitively be
divided into two parts:  the participating MPI ranks ``register'' themselves
as ready for the collective communication; and then the ``work'' of
communication is carried out.  Where the time for the collective communication
calls of an MPI program is significant, it is typically due to significant
``work'' in the second part of the calls.  Adding a trivial barrier
requires the MPI ranks to register themselves once for the trvial barrier
(but no work is involved), and then register themselves again for the
actual MPI collective communication.  The overhead due to registering
twice is tiny in practice.  Evidence for this can be seen in the
experiments in Section~\ref{sec:micro}, which show small overhead.

The purpose of Algorithm~\ref{algo:wrapper} is to enforce the following
extension of the invariant presented in
Section~\ref{sec:mpiCollectivesOverview}:
\begin{quote}
  {\em No checkpoint must take place while a rank is inside the
  collective communication call (Phase~2) of a wrapper function
  for collective communication (Algorithm~\ref{algo:wrapper}}).
\end{quote}
We formalize this with the following theorem, which guarantees
Algorithm~\ref{algo:mpi2pcSimple} satisfies this invariant.

\begin{theorem}
\label{thm:mpi2pcSimple}
Under Algorithm~\ref{algo:mpi2pcSimple}, an MPI rank is never inside
a collective communication call when a checkpoint message is received
from the checkpoint coordinator.
\end{theorem}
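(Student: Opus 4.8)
The plan is to prove this safety invariant by equipping each rank with a small state machine induced by the wrapper of Algorithm~\ref{algo:wrapper} together with an \emph{intend-to-checkpoint flag} that the coordinator's intend-to-checkpoint message installs. I would give each rank the states \emph{outside the wrapper}, \emph{in the trivial barrier} (Phase~1), \emph{at the phase boundary} (the instant after Phase~1 but before Phase~2), \emph{in the real collective} (Phase~2), and \emph{past Phase~2}. The theorem then reduces to the single assertion that, at the instant a rank receives the checkpoint request, it is never in the \emph{in the real collective} state.

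First I would establish a \textbf{gate lemma}: once a rank's intend-to-checkpoint flag is set, any execution reaching the phase boundary blocks there and cannot cross into Phase~2 until the flag is cleared. Because the flag is installed by the intend-to-checkpoint message and is cleared only when the coordinator opens a further iteration to let the ranks make progress---never by the checkpoint request itself---this is a one-way gate that seals off Phase~2 for the remainder of the current round. Next I would prove \textbf{acknowledgment accuracy}: the in-Phase-2/not-in-Phase-2 bit that a rank reports when it acknowledges the intend-to-checkpoint message agrees with its true state at that moment. Finally I would prove \textbf{flag persistence}: between a rank's acknowledgment and its receipt of the checkpoint request the flag remains set, so by the gate lemma the rank stays out of Phase~2 throughout that window.

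With these three facts the theorem follows from the coordinator's decision rule. The coordinator issues a checkpoint request only after a round in which every rank has acknowledged ``not in Phase~2''; acknowledgment accuracy then gives that no rank is in Phase~2 at the moment it acknowledges, while the gate lemma and flag persistence give that none crosses into Phase~2 before the checkpoint request arrives. The case of several simultaneous collective calls (Challenge~III) needs no separate treatment: the flag is per-rank and blocks entry into Phase~2 of \emph{any} collective, so the same gate applies uniformly, and the extra iterations merely drain the ranks that happened to be in Phase~2 during an earlier round---a liveness concern orthogonal to the safety claim proved here.

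The hard part will be acknowledgment accuracy, since the flag is installed asynchronously by the checkpoint helper thread while the application's main thread independently races across the Phase~1/Phase~2 boundary; a rank that slips into Phase~2 just before its flag is set, yet still acknowledges ``not in Phase~2,'' would falsify the theorem. The proof must therefore isolate the local synchronization that makes ``observe the flag'' and ``cross the boundary'' mutually exclusive, so that exactly one of the two wins and the acknowledgment records the winner. The trivial barrier of Phase~1 is what makes this tractable: by forcing all participants of a given collective to rendezvous immediately before the boundary, it localizes the race to one well-defined point and ensures that a checkpoint interrupting a rank there catches it either inside the barrier---the sole sanctioned exception to the invariant---or blocked just short of Phase~2, never within it, so that the locally faithful acknowledgments compose into a globally consistent snapshot.
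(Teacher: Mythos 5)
Your proposal is correct at the level of abstraction at which the paper itself argues, but it takes a genuinely different route. The paper's proof of Theorem~\ref{thm:mpi2pcSimple} is grounded in global, happens-before reasoning about barriers: Axiom~\ref{fundamentalAxiom} defines barrier semantics, Lemma~\ref{lemma:mpiBarrier} shows that a two-round query lets the coordinator obtain a consistent view of a barrier's state, and Lemma~\ref{lemma:wrapperProperties} classifies the possible global configurations of a wrapper invocation into four cases; the theorem is then proved by excluding cases (a) and (b) via the reply semantics of Algorithm~\ref{algo:mpi2pcSimple} --- a rank replies \emph{ready} or \emph{in-phase-1} only if it will stop before the collective call, while a rank in Phase~2 withholds its reply until it exits, reports \emph{exit-phase-2}, and thereby forces an extra iteration. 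Your decomposition --- gate lemma, acknowledgment accuracy, flag persistence, plus the coordinator's while-loop decision rule --- is a purely per-rank, local-invariant argument that never invokes the happens-before machinery, and it reaches the same conclusion: in the final round every rank truthfully reported that it was not in Phase~2, and the gate keeps it out of Phase~2 until the \emph{do-ckpt} message arrives. For this theorem alone your route is leaner and makes explicit the operative content that the paper's proof actually uses (its citation of Lemma~\ref{lemma:wrapperProperties} is largely framing); the paper's barrier lemmas earn their keep elsewhere, in formalizing the Challenge~I consistency concern and in supporting the latency bound of Theorem~\ref{thm:efficiency}. Your observation that Challenge~III needs no separate safety treatment, only extra iterations, also matches the paper.

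One soft spot: you attribute the resolution of the flag-installation versus boundary-crossing race to the trivial barrier, and that is misplaced. The barrier synchronizes ranks with one another and exists to solve the progress/latency problem (Challenge~II); it does nothing to serialize the checkpoint helper thread against the application thread \emph{within} a single rank. In Algorithm~\ref{algo:mpi2pcSimple} the event handler is implicitly atomic with respect to the rank's phase transitions, which is exactly your acknowledgment-accuracy lemma taken as an assumption; in the implementation that atomicity must come from explicit synchronization between the helper thread and the wrapper's check-and-block at the Phase~1/Phase~2 boundary. The paper does not prove this either --- it delegates the concern to TLA+/PlusCal model checking (Section~\ref{sec:implementation}). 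So your instinct that acknowledgment accuracy is the hard part is sound, but your proposed mechanism for discharging it would not work; state it instead as an atomicity assumption on the message handler, which matches the abstraction at which the paper's own proof operates.
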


The proof of this theorem is deferred until the end of this subsection.
We begin the path to this proof by stating an axiom that serves
to define the concept of a barrier.

\begin{axiom}
\label{fundamentalAxiom}
For a given invocation of an MPI barrier, it never happens that
a rank~A exits from the barrier before another
rank~B enters the barrier under the ``happens-before'' relation.
\end{axiom}

Next, we present the following two lemmas.

\begin{figure}[t!]
  \centering
  \includegraphics[width=0.8\columnwidth]{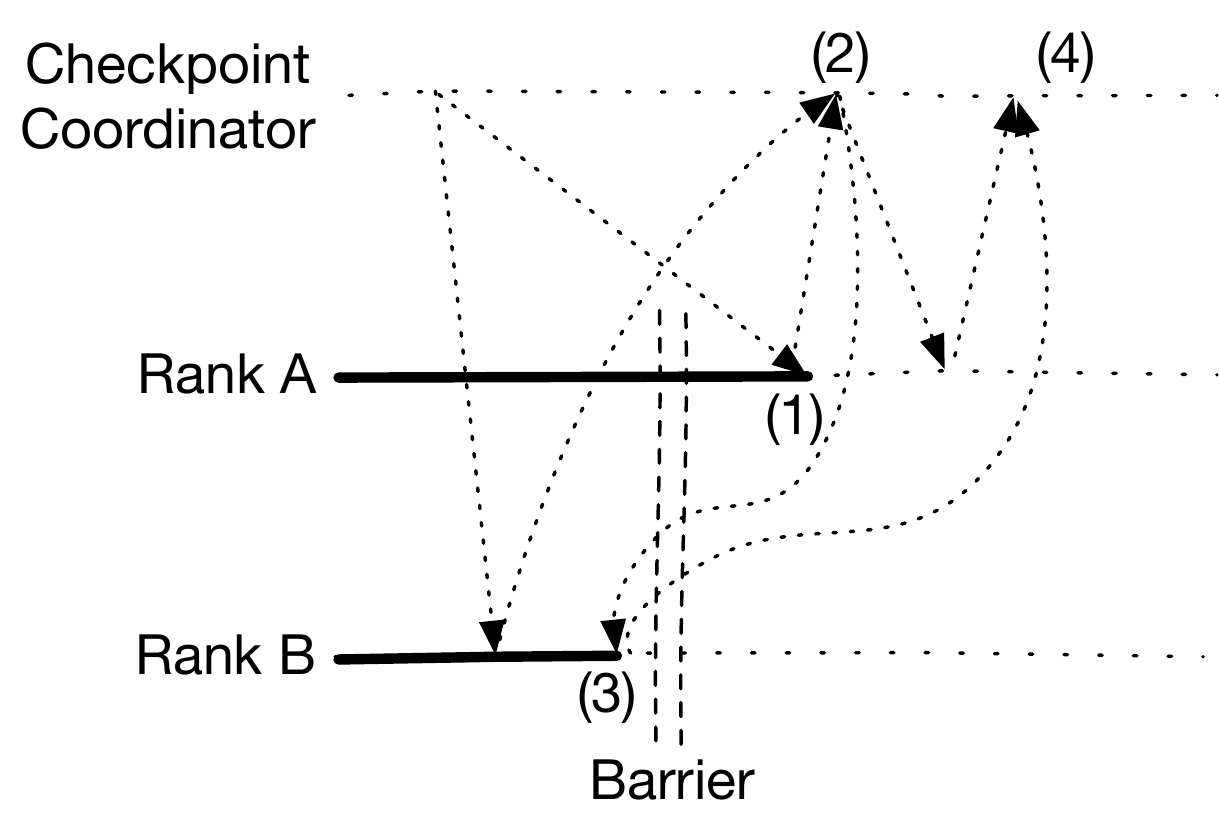}
  \caption{Fundamental ``happens-before'' relation in
    communication between the checkpoint coordinator and the MPI
    ranks involved in an MPI barrier.
    \label{fig:barrier}}
\end{figure}

\begin{lemma}
\label{lemma:mpiBarrier}
For a given MPI barrier, if the checkpoint coordinator sends a message to
each MPI rank participating in the barrier, and if at least one of the
reply messages from the participating ranks reports that its rank has
exited the barrier, then the MPI coordinator can send a second message
to each participating rank, and each MPI rank will reply that it has
entered the barrier (and perhaps also exited the barrier).
\end{lemma}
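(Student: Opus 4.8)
The plan is to argue entirely in terms of the ``happens-before'' relation, taking Axiom~\ref{fundamentalAxiom} as the single nontrivial ingredient together with the standard fact that events occurring at a single participant (a rank, or the coordinator) lie on one totally ordered local timeline, and that this local order is included in happens-before. First I would fix the rank whose round-1 reply reports that it has exited the barrier; call it $A$, and write $e_A^{\mathrm{exit}}$ for the event of $A$ exiting the barrier and $e_B^{\mathrm{enter}}$ for the event of an arbitrary other participating rank $B$ entering it. Since $A$ reported ``exited'' in its reply, it must have exited before composing and sending that reply, so $e_A^{\mathrm{exit}}$ happens-before $A$'s round-1 reply; and because the coordinator sends the second message only after receiving the round-1 replies (in particular $A$'s ``exited'' reply), $e_A^{\mathrm{exit}}$ happens-before the coordinator's round-2 send. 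This is exactly the causal chain depicted in Figure~\ref{fig:barrier}.

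Next I would assemble, for an arbitrary participating rank $B \neq A$, the single happens-before chain
\[
  e_A^{\mathrm{exit}} \to \mbox{($A$ sends reply)} \to \mbox{(coord.\ receives reply)}
  \to \mbox{(coord.\ sends round~2)} \to \mbox{($B$ receives round~2)},
\]
whose links are, respectively, $A$'s local order, a message send-to-receive edge, the coordinator's local order, and a second send-to-receive edge. By transitivity this gives that $e_A^{\mathrm{exit}}$ happens-before $B$'s receipt of the round-2 message. The goal of the lemma then reduces to showing that $B$ has already entered the barrier at that moment, i.e.\ that $e_B^{\mathrm{enter}}$ happens-before $B$'s receipt of round~2, for then $B$'s round-2 reply must report ``entered'' (and possibly also ``exited'').

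The key step is a proof by contradiction that exploits the total order of events local to $B$. Suppose $B$ had \emph{not} yet entered the barrier when it received the round-2 message. Then on $B$'s local timeline the receipt of round~2 precedes $e_B^{\mathrm{enter}}$, and since local order is contained in happens-before we would have that $B$'s receipt of round~2 happens-before $e_B^{\mathrm{enter}}$. Composing this with the chain above yields $e_A^{\mathrm{exit}} \to e_B^{\mathrm{enter}}$, i.e.\ rank $A$ exits the barrier before rank $B$ enters it under happens-before, which is precisely what Axiom~\ref{fundamentalAxiom} forbids. Hence $B$ must already have entered the barrier by the time it receives round~2, establishing the claim for every $B \neq A$; the rank $A$ itself satisfies the conclusion trivially, having already reported ``exited'' (hence ``entered'') in round~1.

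The main obstacle I anticipate is arranging the happens-before bookkeeping so that the \emph{negative} form of the axiom can be applied: the axiom asserts that $e_A^{\mathrm{exit}} \to e_B^{\mathrm{enter}}$ never holds, so the argument must be cast as a contradiction rather than a direct derivation. Care is also needed to justify that ``$B$ has not entered at the moment of round-2 receipt'' genuinely places that receipt before $e_B^{\mathrm{enter}}$ in $B$'s local order, rather than leaving the two events concurrent; this relies on the modelling assumption that all of a rank's relevant actions---entering the barrier, receiving coordinator messages, and sending replies---occur on a single totally ordered local timeline, which I would state explicitly as a standing hypothesis before the argument begins.
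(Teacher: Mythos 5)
Your proposal is correct and takes essentially the same approach as the paper: both argue by contradiction, composing the happens-before chain from rank~$A$'s barrier exit through its round-1 reply and the coordinator's round-2 send to rank~$B$'s receipt, and then invoking Axiom~\ref{fundamentalAxiom} to rule out $B$ not yet having entered. Your write-up simply makes explicit the causal bookkeeping (local total orders plus send-to-receive edges) that the paper delegates to Figure~\ref{fig:barrier}.
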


\begin{proof}
We prove the lemma by contradiction.  Suppose that the lemma does not
hold.  Figure~\ref{fig:barrier} shows the general case in which
this happens.  At event~4, the checkpoint coordinator will conclude that
event~1 (rank~A has exited the MPI barrier) happened before event~2
(the first reply by each rank), which happened before event~3 (in
which rank~B has not yet entered the barrier).  But this contradicts
Axiom~\ref{fundamentalAxiom}.  Therefore, our assumption is false,
and the lemma does indeed hold.
\end{proof}

\begin{lemma}
\label{lemma:wrapperProperties}
Recall that an MPI collective communication wrapper makes a call
to a trivial barrier and then makes an MPI collective communication call.
For a given invocation of an
MPI collective communication wrapper, we know that one of four
cases must hold:
\begin{enumerate}
\item[(a)] an MPI rank is in the collective communication call,
  and all other ranks are either in the call, or have exited;
\item[(b)] an MPI rank is in the collective communication call,
  and no rank has exited, and every other rank has at least entered
  the trivial barrier (and possibly proceeded further);
\item[(c)] an MPI rank is in the trivial barrier and no other rank
  has exited (but some may not yet have entered the trivial barrier);
\item[(d)] either no MPI rank has entered the trivial barrier, or
  all MPI ranks have exited the MPI collective communication call.
\end{enumerate}
\end{lemma}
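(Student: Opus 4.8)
The plan is to prove the lemma by a case analysis on the single rank that has progressed furthest through the wrapper of Algorithm~\ref{algo:wrapper}. First I would fix the given invocation of the wrapper and observe that every participating rank passes, in order, through four local states: (0)~not yet entered the trivial barrier; (1)~inside the trivial barrier; (2)~inside the collective communication call, which I identify with having exited the trivial barrier, since Phase~2 immediately follows Phase~1; and (3)~having exited the collective call. Because each rank executes a single thread of control, its own progress is totally ordered, so in the global state under consideration there is a well-defined most-advanced rank~$R$. The four cases (a)--(d) will then correspond exactly to the four possible local states of~$R$, and the lemma does not require these cases to be mutually exclusive, only jointly exhaustive.

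Next I would discharge the four cases using Axiom~\ref{fundamentalAxiom} to constrain the remaining ranks. If $R$ is in state~(0), then every rank is in state~(0) and we are in the first disjunct of~(d). If $R$ is in state~(1), then no rank has exited the trivial barrier and every rank is in state~(0) or~(1), which is case~(c) (some ranks may still be in state~(0), i.e.\ not yet entered). If $R$ is in state~(2), then $R$ has already exited the trivial barrier, so Axiom~\ref{fundamentalAxiom} applied to the trivial barrier forces every rank to have at least entered it, ruling out state~(0); since $R$ is most advanced, no rank has exited the collective call, giving case~(b). Finally, if $R$ is in state~(3), I would apply Axiom~\ref{fundamentalAxiom} to the \emph{registration} sub-barrier of the collective call to conclude that every rank has entered the collective; splitting on whether some rank is still inside the call (case~(a)) or all ranks have exited (the second disjunct of~(d)) completes the analysis.

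The step I expect to be the main obstacle is state~(3). Axiom~\ref{fundamentalAxiom} is stated only for an MPI barrier, whereas here I must apply its spirit to the collective call itself, asserting that no rank exits the collective before every rank has registered for it. This is precisely what rules out the otherwise-uncovered global configuration in which one rank has exited the collective while another still lingers in the trivial barrier; without it the four cases would fail to be exhaustive. I would justify this by invoking the decomposition of a collective call into a barrier-like ``registration'' phase followed by the ``work'' phase, as described in Section~\ref{sec:mpiCollectivesAlgo}, so that the registration phase satisfies Axiom~\ref{fundamentalAxiom}. The remaining bookkeeping---confirming that the boundary window between exiting the trivial barrier and entering the collective does not create a fifth case, and that the existence of ``at least one rank in the collective'' is what separates~(a) from the second disjunct of~(d)---is routine.
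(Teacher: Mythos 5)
Your proof is correct under the paper's modeling assumptions, but it takes a genuinely different route. The paper argues case by case via repeated application of Lemma~\ref{lemma:mpiBarrier}: it derives the side conditions of cases (a)--(c) individually and then obtains case (d) purely by elimination, so exhaustiveness is never argued directly. You instead totally order each rank's progress through the wrapper and case-split on the most-advanced rank, invoking Axiom~\ref{fundamentalAxiom} directly; this is arguably cleaner, since Lemma~\ref{lemma:mpiBarrier} is phrased in terms of coordinator message round-trips while the present lemma is about global states, and all either proof really needs is the contrapositive of the axiom (if some rank has exited a barrier, every rank has entered it). On the step you flag as the main obstacle --- extending the barrier property to the collective call itself in your state~(3) --- you are more careful than the paper: the paper's treatment of case~(a) silently applies Lemma~\ref{lemma:mpiBarrier} to the collective call, tacitly assuming no rank exits a collective before all have entered (false for, say, an eager \texttt{MPI\_Bcast}), whereas you make the assumption explicit and ground it in the ``registration''/``work'' decomposition the paper itself describes. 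Your coarse-graining of the window between exiting the trivial barrier and entering the collective into a single state matches the paper's own state model (\emph{ready}, \emph{in-phase-1}, \emph{exit-phase-2}), so treating it as bookkeeping is acceptable, though strictly it is what makes the four cases exhaustive.
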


\begin{proof}
The proof is by repeated application of Lemma~\ref{lemma:mpiBarrier}.
For case~a, if an MPI rank is in the collective
communication call and another rank has exited the collective call,
then Lemma~\ref{lemma:mpiBarrier} says that there cannot be any rank
that has not yet entered the collective call.
For case~b, note that if an MPI rank is in the collective communication
call, then that rank has exited the trivial barrier.  Therefore, by
Lemma~\ref{lemma:mpiBarrier}, all other ranks have at least entered the
trivial barrier.  Further, we can assume that no ranks that
have exited the collective call, since we would otherwise be in case~a,
which is already accounted for.
For case~c, note that if an MPI rank is in the trivial barrier and no
rank has exited the trivial barrier, then Lemma~\ref{lemma:mpiBarrier}
says that there cannot be any rank that has not yet entered the trivial
barrier.
Finally, if we are not in case~a, b, or~c, then the only remaining
possibility is case~d:  all ranks have not yet entered the trivial
barrier or all ranks have exited the collective call.
\end{proof}

\begin{algorithm}
  \begin{algorithmic}[1]
    \caption{\label{algo:mpi2pcSimple} Two-Phase algorithm for deadlock-free
                        checkpointing of MPI collectives}
    \State{{\bf Messages:} \{intend-to-checkpoint, extra-iteration, do-ckpt\}}
    \State{{\bf MPI states:} \{ready, in-phase-1, exit-phase-2\}}
    \medskip

    \Process{Checkpoint Coordinator}
      \Function{Begin Checkpoint}{}
       \State{send intend-to-ckpt msg to all ranks}
       \State{receive responses from each rank}
       \While{some rank in state exit-phase-2}
         \State{send extra-iteration msg to all ranks}
         \State{receive responses from each rank}
        \EndWhile
       \State{send do-ckpt msg to all ranks}
      \EndFunction
    \EndProcess
    \medskip

    \Process{MPI Rank}
      \Event{intend-to-ckpt msg or extra-iteration msg}
        \If{not inCollectiveWrapper}
          \State{reply to ckpt coord:~~ state $\gets$ ready}
        \EndIf
        \If{inCollectiveWrapper and in Phase 1}
          \State{reply to ckpt coord:~~ state $\gets$ in-phase-1}
        \EndIf
        \If{inCollectiveWrapper and in Phase 2}
          \State{\Comment{guaranteed ckpt coord won't request ckpt here}}
          \State{finish executing coll. comm. call}
          \State{reply to ckpt coord:~~ state $\gets$ exit-phase-2}
          \State{\Comment{ckpt coord can request ckpt after this}}
          \State{set state $\gets$ ready}
        \EndIf
        \State{continue, but {\em wait} before next coll. comm. call}
      \EndEvent
      \Event{do-ckpt msg}
        \State{\Comment{guaranteed now that no rank is in phase~2 during ckpt}}
        \State{do local checkpoint for this rank}
        \State{\Comment{all ranks may now continue executing}}
        \If{this rank is waiting before coll. comm. call}
          \State{unblock this rank and continue executing}
        \EndIf
      \EndEvent
    \EndProcess
  \end{algorithmic}
\end{algorithm}

We now continue with the proof of the main theorem
(Theorem~\ref{thm:mpi2pcSimple}), which was deferred earlier.

\begin{proof}
{\em (Proof of Theorem~\ref{thm:mpi2pcSimple}
     for Algorithm~\ref{algo:mpi2pcSimple}.)}
Lemma~\ref{lemma:wrapperProperties} states that one of four cases must
hold in a call by \sol{} to an MPI collective communication wrapper.
We wish to exclude the possibility that an MPI rank is in the collective
communication call (case~a or~b of the lemma) when the checkpoint
coordinator invokes a checkpoint.

In Algorithm~\ref{algo:mpi2pcSimple},
assume that the checkpoint coordinator has sent an {\em intend-to-ckpt}
message, and has not yet sent a {\em do-ckpt} message.
An MPI rank will either reply
with state {\em ready} or {\em in-phase-1} (showing that it is not
in the collective communication call and that it will stop before
entering the collective communication call), or else it must be
in Phase~2 of the wrapper
(potentially within the collective communication call), and it will not
reply to the coordinator until exiting the collective call.
\end{proof}

\begin{theorem}
\label{thm:efficiency}
Under Algorithm~\ref{algo:mpi2pcSimple}, deadlock will never occur.
Further,
the delay between the time when all ranks have received the
{\em intend-to-checkpoint} message and the time when the {\em do-ckpt}
message has been sent is bounded by the maximum time for any
individual MPI rank to enter and exit the collective communication call,
plus network message latency.
\end{theorem}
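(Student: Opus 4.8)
The plan is to prove the two claims separately, reusing the structural facts already established for Algorithm~\ref{algo:mpi2pcSimple}. The central fact I would exploit is that the handler for the {\em intend-to-ckpt} and {\em extra-iteration} events ends with the instruction to {\em wait} before the next collective communication call (Phase~2), and that this block is released only inside the {\em do-ckpt} handler. Consequently, once a rank has replied to the coordinator even once, it cannot subsequently enter a Phase~2 collective call before the checkpoint is taken. I would state this as the key invariant and establish it first, directly from the control flow of the MPI~Rank process.

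For the no-deadlock claim I would proceed in three steps. First, using the invariant, I would show that every rank reaches a stable ``parked'' position: a rank in application code advances into the next trivial barrier and then blocks before Phase~2; a rank already in Phase~2 runs its collective call to completion, replies {\em exit-phase-2}, and then blocks before the following Phase~2; and a rank already inside a trivial barrier completes it and then blocks. Second, I would argue that no rank can be stuck forever in a trivial barrier: since the handler only blocks a rank {\em before} Phase~2 and never before a trivial barrier, every participant of a given communicator is eventually driven into that communicator's trivial barrier, so by Axiom~\ref{fundamentalAxiom} the barrier completes. Here I would invoke the in-order matching of collective calls on a fixed communicator, together with Lemma~\ref{lemma:mpiBarrier}, to rule out the configuration in which one rank waits in a later trivial barrier for a peer that is permanently blocked at an earlier Phase~2. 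Third, I would conclude that the coordinator's loop terminates: once it has collected the replies to the first {\em intend-to-ckpt} round, all ranks are blocked before Phase~2, so by the invariant none can be inside a Phase~2 call thereafter; hence in the next round no rank reports {\em exit-phase-2}, the \texttt{while} loop exits, and {\em do-ckpt} is sent. Since every rank then receives {\em do-ckpt}, checkpoints, and unblocks, no circular wait is possible.

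For the latency bound I would reuse the same invariant to count loop iterations and charge time to each phase. After all ranks have received {\em intend-to-ckpt}, the only ranks that cannot reply immediately are those currently inside a Phase~2 call; each such rank replies once it exits, which takes at most the time for one rank to enter and exit a collective communication call, while every other rank replies after at most one network message latency. Because no rank re-enters Phase~2 after replying, the coordinator can observe {\em exit-phase-2} only among the first-round replies, so the \texttt{while} loop runs at most one extra iteration, and that iteration costs only a message round trip (all ranks are already parked and reply without further delay). Summing the single collective-call duration with this constant number of request/acknowledge round trips yields the stated bound of the maximum enter-and-exit time of a collective call plus network latency.

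I expect the main obstacle to be the second step of the no-deadlock argument: ruling out a deadlock {\em inside} a trivial barrier in the presence of multiple independent collective calls (Challenge~III). The difficulty is to show that the set of per-rank ``parked before Phase~2'' positions is globally consistent---that no rank waits in a trivial barrier for a peer whose own progress is blocked at a different, earlier collective---which requires combining Axiom~\ref{fundamentalAxiom} and Lemma~\ref{lemma:mpiBarrier} with the in-order matching of collective calls on each communicator. Once that consistency is established, both termination of the loop and the iteration count needed for the latency bound follow with little further work.
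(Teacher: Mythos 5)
Your key invariant---once a rank has replied to the coordinator it cannot enter Phase~2 before the checkpoint---and your latency accounting coincide with the paper's own proof of Theorem~\ref{thm:efficiency}: the paper argues exactly that, upon receiving the \emph{intend-to-checkpoint} message, a rank is either already in Phase~2 (and replies \emph{exit-phase-2} after at most one enter-and-exit of the collective) or remains in Phase~1, from which the stated bound follows. Your explicit count that the coordinator's \texttt{while} loop performs at most one extra iteration is a sharpening that the paper leaves implicit, and it is correct under the invariant, since all deferred Phase~2 replies belong to the first round.

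The genuine gap is in the second step of your no-deadlock argument. You require every trivial barrier to complete before the checkpoint (``a rank already inside a trivial barrier completes it and then blocks'') and identify proving this as the main obstacle. That claim is false in general: with multiple communicators, a rank can be parked before Phase~2 of one wrapper while a peer waits inside the trivial barrier of a different wrapper whose remaining participants never arrive before the checkpoint; no appeal to Axiom~\ref{fundamentalAxiom}, Lemma~\ref{lemma:mpiBarrier}, or in-order matching can rescue it, because the barrier genuinely does not complete. More importantly, it does not need to hold. The entire point of the trivial-barrier exception in Section~\ref{sec:mpiCollectivesOverview} is that \texttt{MPI\_Barrier()} has no side effects, so a rank may be checkpointed while blocked inside it and the call is simply re-issued on restart; \emph{in-phase-1} is a legitimate terminal state of the protocol, and the checkpoint helper thread replies to the coordinator on behalf of a rank whose main thread is blocked in the barrier, so an incomplete trivial barrier never stalls the coordinator. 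Accordingly, the paper's proof never argues barrier completion: it needs only that each rank either finishes Phase~2 and reports \emph{exit-phase-2} (triggering an extra iteration) or halts in state \emph{ready}/\emph{in-phase-1}. You should drop the completion obligation and instead admit ``blocked inside the trivial barrier'' as a valid parked position. (One caveat cuts both ways: both you and the paper assume without argument that a rank already in Phase~2 always runs its collective to completion even though peers that replied \emph{in-phase-1} are halted before Phase~2---the configuration of case~(b) of Lemma~\ref{lemma:wrapperProperties}; the paper dismisses this as ``normal MPI operation,'' so your instinct that cross-rank blocking is the delicate point was sound, but it attaches to the collective call itself, not to the trivial barrier.)
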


\begin{proof}
The algorithm will never deadlock, since each rank must either make
progress based on the normal MPI operation
or else it stops {\em before} the collective communication call.
If any rank replies with the state {\em exit-phase-2}, then the checkpoint
coordinator will send an additional {\em extra-iteration} message.
So, at the time of checkpoint, all ranks will have state {\em ready}
or {\em in-phase-1}.

Next, the delay between the time when all ranks have received the
{\em intend-to-checkpoint} message and the time when the {\em do-ckpt}
message has been sent is clearly bounded by the maximum time for an
individual MPI rank to enter and exit the collective communication call,
plus the usual network message latency.
This is the case since once the {\em intend-to-checkpoint} message is
received, no MPI rank may enter the collective communication
call.  So, upon receiving the {\em intend-to-checkpoint} message,
either the rank is already in Phase~2 or else it will remain
in Phase~1.
\end{proof}

\smallskip\noindent{\bf Implementation of Algorithm~\ref{algo:mpi2pcSimple}:}
At the time of process launch for an MPI rank, a separate checkpoint
helper thread is also injected into each rank. This thread is responsible
for listening to checkpoint-related messages from a separate
coordinator process and then responding.  This allows the MPI rank
to asynchronously process events based on messages received from the
checkpoint coordinator.  Furthermore at the time of checkpoint, the
existing threads of the MPI rank process are quiesced (paused) by the
helper thread, and the helper thread carries out the checkpointing
requirements, such as copying the upper-half memory regions to stable
storage.  The coordinator process does not participate in the
checkpointing directly.  In the implementation, a DMTCP coordinator
and DMTCP checkpoint thread~\cite{ansel2009dmtcp} are modified to serve
as checkpoint coordinator and helper thread, respectively.

\subsection{Verification with TLA+/PlusCal}
\label{sec:implementation}

To gain further confidence in our implementation for handling
collective communication (Section~\ref{sec:mpiCollectivesAlgo}), we
developed a model for the protocol in TLA+~\cite{lamport1999tla} and then
used the PlusCal model checker of TLA+ based on TLC~\cite{yu1999model}
to verify Algorithm~\ref{algo:mpi2pcSimple}.  Specifically, PlusCal was
used to verify the algorithm invariants of deadlock-free execution and
consistent state when multiple concurrent MPI processes are executing. The
PlusCal model checker did not report any deadlocks or broken invariants
for our implementation.

\subsection{Checkpoint/Restart Package}
\label{sec:crpackage}
Any single-process checkpointing package could be utilized for the basis of
implementing \sol{}.  This work presents a prototype implemented by
extending DMTCP~\cite{ansel2009dmtcp} and by developing a DMTCP
plugin~\cite{arya2016design}.  Cao et al.~\cite{cao2016system} demonstrated that
DMTCP can checkpoint MPI-based HPCG over 32,752 CPU cores (38~TB) in 11~minutes,
and MPI-based NAMD over 16,368 cores (10~TB) in 2.6~minutes.

DMTCP uses a helper thread inside each application process,
and a coordinated checkpointing protocol by using a
centralized coordinator daemon. Since this was close to the design requirements
of \sol{}, we leveraged this infrastructure and extended the DMTCP coordinator
to implement the two-phase algorithm.

The same approach could be extended to base \sol{} on top of
a different underlying transparent checkpointing package.  For example,
one could equally well have modified an existing MPI coordinator process
to communicate with a custom helper thread in each MPI rank that then
invokes the BLCR checkpointing package when it is required to execute the
checkpoint.  In particular, all sockets and other network communication
objects are inside the lower half, and so even a single-process or
single-host checkpointing package such as BLCR would suffice for this
work.

\section{Experimental Evaluation}
\label{sec:evaluation}

This section seeks to answer the following questions:

\noindent \textbf{Q1:} What is the runtime overhead of running MPI
applications under \sol{}?

\noindent \textbf{Q2:} What are the checkpoint and restart overheads
of transparent checkpointing of MPI applications under \sol{}?

\noindent \textbf{Q3:} Can \sol{} allow transparent switching of MPI
implementations across checkpoint-restart for the purpose of load balancing?

\subsection{Setup}
\label{sec:expSetup}

We first describe the hardware and software setup for \sol{}'s
evaluation.

\subsubsection{Hardware}
\label{sec:hardware}

The experiments were run on the Cori supercomputer~\cite{cori} at
the National Energy Research Scientific Computing Center (NERSC).
As of this writing, Cori is the \#12 supercomputer in the Top-500
list~\cite{top500}. All experiments used the Intel Haswell
nodes (dual socket with a 16-core Xeon E5-2698~v3 each)
connected via Cray's Aries interconnect
network.  Checkpoints were saved to the backend Lustre filesystem.

\subsubsection{Software}
\label{sec:software}

Cori provides modules for two implementations of MPI: Intel MPI
and Cray MPICH. The Cray compiler (based on an Intel compiler) and
Cray MPICH are the recommended way to use MPI, presumably for reasons
of performance.  Cray MPICH version 3.0 was used for the experiments.

\subsubsection{Application Benchmarks}
\label{sec:benchmarks}

\sol{} was tested with five real-world
HPC applications from different computational science domains:

\begin{enumerate}

  \item GROMACS~\cite{gromacs}: Versatile package for
  molecular dynamics, often used for biochemical molecules.

  \item CLAMR~\cite{clamr, clamrSrc}: Mini-application for
  CelL-based Adaptive Mesh Refinement.


  \item miniFE~\cite{herouxminife}: Proxy application for
        unstructured implicit finite element codes.

  \item LULESH~\cite{lulesh}: Unstructured Lagrangian Explicit Shock
  Hydrodynamics

  \item HPCG~\cite{dongarra2016new} (High Performance Conjugate Gradient):
        Uses a variety of linear algebra operations to match a broad set
        of important HPC applications, and used for ranking HPC systems.
\end{enumerate}

\subsection{Runtime Overhead}
\label{sec:runtimeOvhd}

\subsubsection{Real-world HPC Applications}

Next, we evaluate the performance of \sol{} for real-world
HPC applications.  It will be shown that the runtime overhead is close to
\SI{0}{\percent} for miniFE and HPCG, and as much as \SI{2}{\percent} for
the other three applications.  The higher overhead has been tracked down
to an inefficiency in the Linux kernel~\cite{fsgsbaseLWN} in the case of
many point-to-point MPI calls (send/receive) with messages of small size.
This worst case is analyzed further in Section~\ref{sec:patchedKernel},
where tests with an optimized Linux kernel show a worst case runtime
overhead of \SI{0.6}{\percent}.  The optimized Linux kernel is
based on a patch under review for a future Linux version.

\begin{figure}[t!]
  \centering
  \includegraphics[scale=0.5]{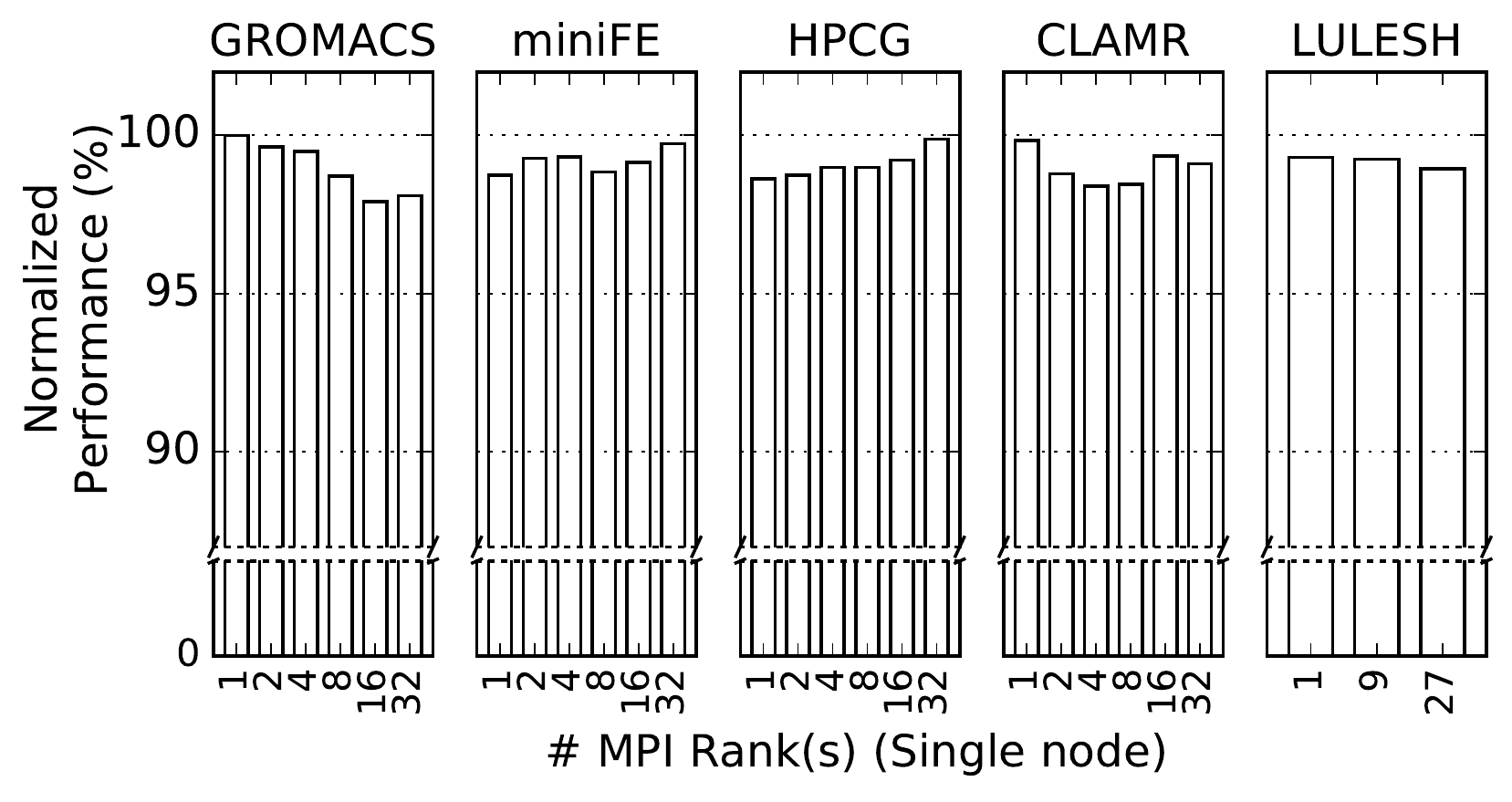}
  \vspace{-0.7cm}
  \caption{Single Node: Runtime overhead under \sol{} for different real-world
  HPC benchmarks with an unpatched Linux kernel. (Higher is better.)
  \label{fig:singleNode}}
  \vspace{-0.4cm}
\end{figure}

\begin{figure}[t!]
  \centering
  \includegraphics[scale=0.5]{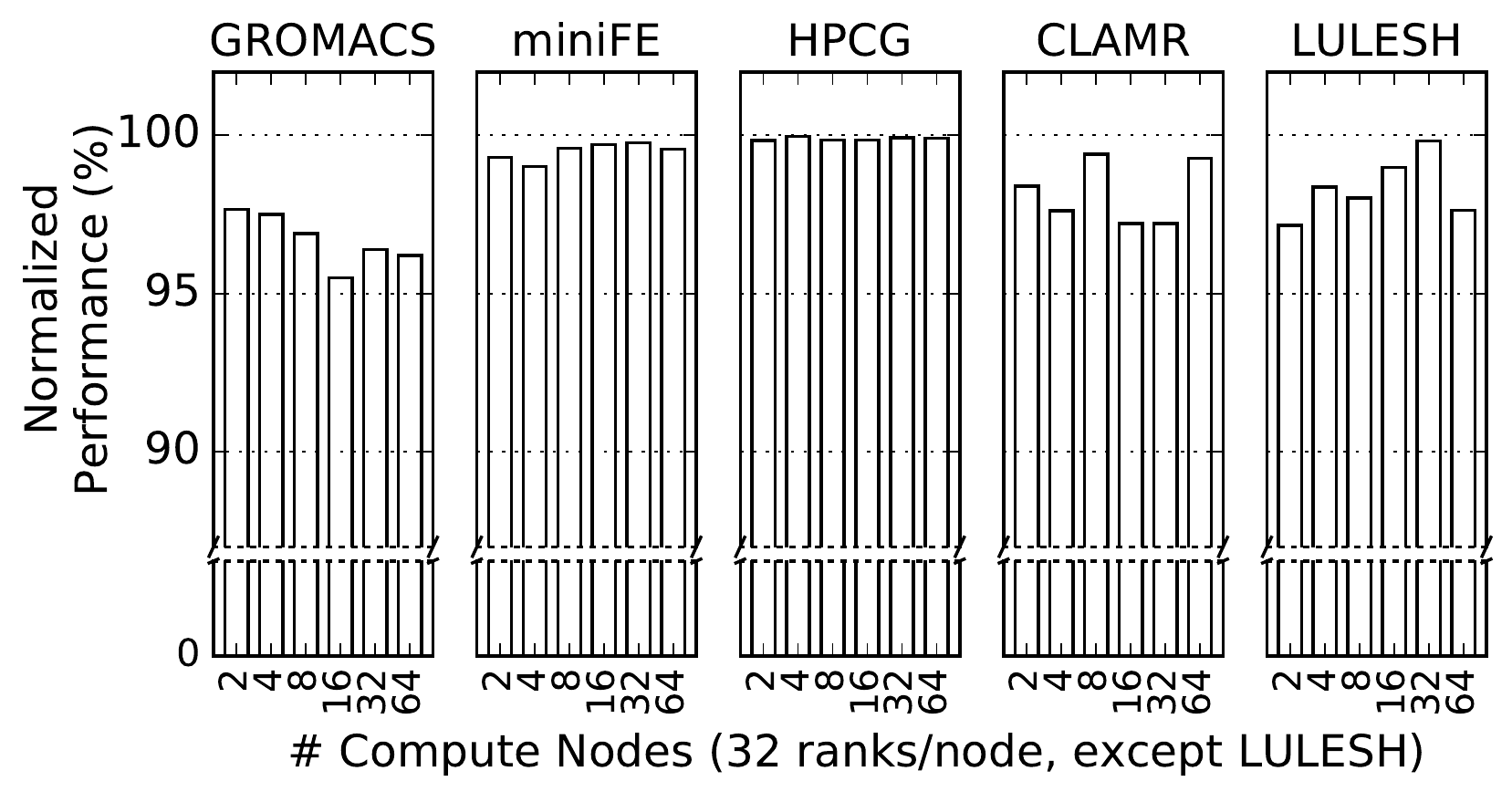}
  \vspace{-0.7cm}
  \caption{Multiple Nodes: Runtime overhead under \sol{} for different
  real-world HPC benchmarks with an unpatched Linux kernel.
  In all cases, except LULESH, 32~MPI ranks
  were executed on each compute node. (Higher is better.)
  \label{fig:multiNode}}
  \vspace{-0.4cm}
\end{figure}

\noindent \textbf{Single Node:} Since the tests were performed within
a larger cluster where the network use of other jobs could create
congestion, we first eliminate any network-related
overhead by running the benchmarks on a single node with
multiple MPI ranks, both under \sol{} and natively (without \sol{}). This
experiment isolates the single-node runtime overhead of \sol{}
by ensuring that all communication among ranks is intra-node.

Figure~\ref{fig:singleNode} shows the results for the five different
real-world HPC applications running on a single node under \sol{}. Each run
was repeated 5~times (including the native runs), and the figure shows the
mean of the 5~runs. The absolute runtimes varied from \SI{4.5}{\minute} to
\SI{15}{\minute}, depending on the configuration. The worst case overhead
incurred by \sol{} is \SI{2.1}{\percent} in the case of GROMACS (with 16~MPI
ranks).
In most cases, the mean overhead is less than \SI{2}{\percent}.

\begin{figure}
  \centering
    \includegraphics[scale=0.3]{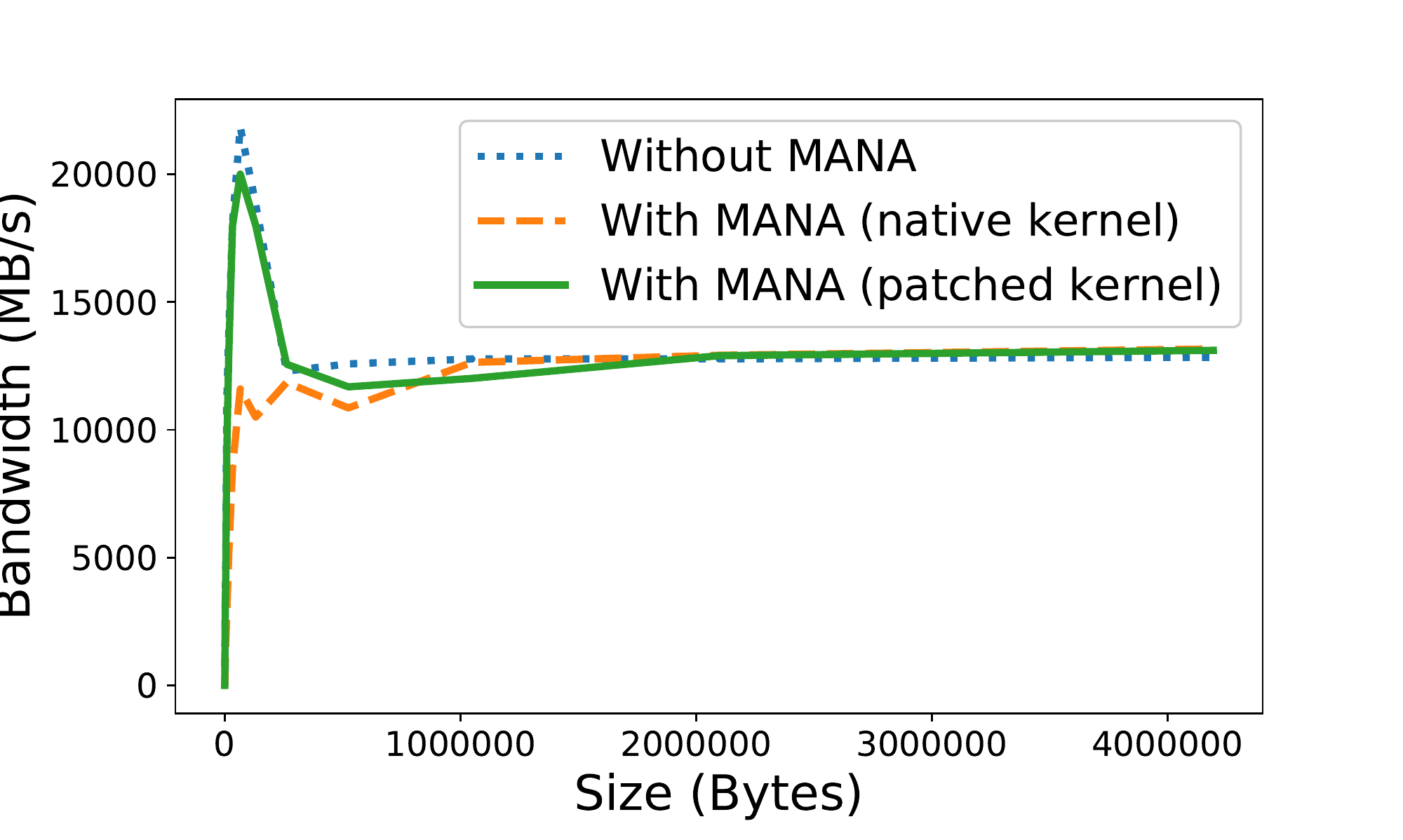}
  \caption{Point-to-Point Bandwidth under \sol{} with patched
  and unpatched Linux kernel.  (Higher is better.)
  \vspace{-0.4cm}
	\label{fig:osuP2Pall}
  }
\end{figure}

\noindent \textbf{Multiple Nodes:} Next, the scaling of \sol{}
across the network is examined for up to 64~compute nodes and with 32~ranks
per node (except for LULESH, whose configuration restricts the number
of ranks/node based on the number of nodes).
Hence, the number of MPI ranks
ranges from 64 to~\num{2048}.

Figure~\ref{fig:multiNode} shows the results of five different real-world
HPC applications running on multiple nodes under \sol{}. Each run was repeated
5 times, and the mean of 5~runs is reported. We observe a trend similar to
the single node case. \sol{} imposes an overhead of typically less than \SI{2}{\percent}.
The highest overhead observed is \SI{4.5}{\percent} in the case of GROMACS (512
ranks running over 16 nodes). However, see Section~\ref{sec:patchedKernel}
where we demonstrate a reduced overhead of \SI{0.6}{\percent} with GROMACS.

\subsubsection{Memory Overhead}

The upper-half libraries were built with \texttt{mpicc}, and hence include
additional copies of the MPI library that are not used.  However, the
upper-half MPI library is never initialized, and so no network
library is ever loaded into the upper half.

Since a significant portion of the lower half is comprised only of the
MPI library and its dependencies, the additional copy of the libraries
(with one copy residing in the upper half) imposes a constant memory
overhead. This text segment (code region) was \SI{26}{\mega\byte} in
all of our experiments on Cori with the Cray MPI library.

In addition to the code, the libraries (for example, the networking
driver library) in the lower half also allocate additional memory
regions (shared memory regions, pinned memory regions, memory-mapped
driver regions). We observed that the shared memory regions mapped by the
network driver library grow in proportion with the number of nodes (up
to 64~nodes): from 2~MB (for 2~nodes) to 40~MB for (64~nodes). We expect
\sol{} to have a reduced checkpoint time compared to
DMTCP/InfiniBand~\cite{cao2014transparent}, as \sol{} discards these
regions during checkpointing, reducing the amount of data that's written
out to the disk.

\subsubsection{Microbenchmarks}
\label{sec:micro}

\begin{figure*}[t!]
  \begin{subfigure}[t]{0.33\textwidth}
    \includegraphics[scale=0.30]{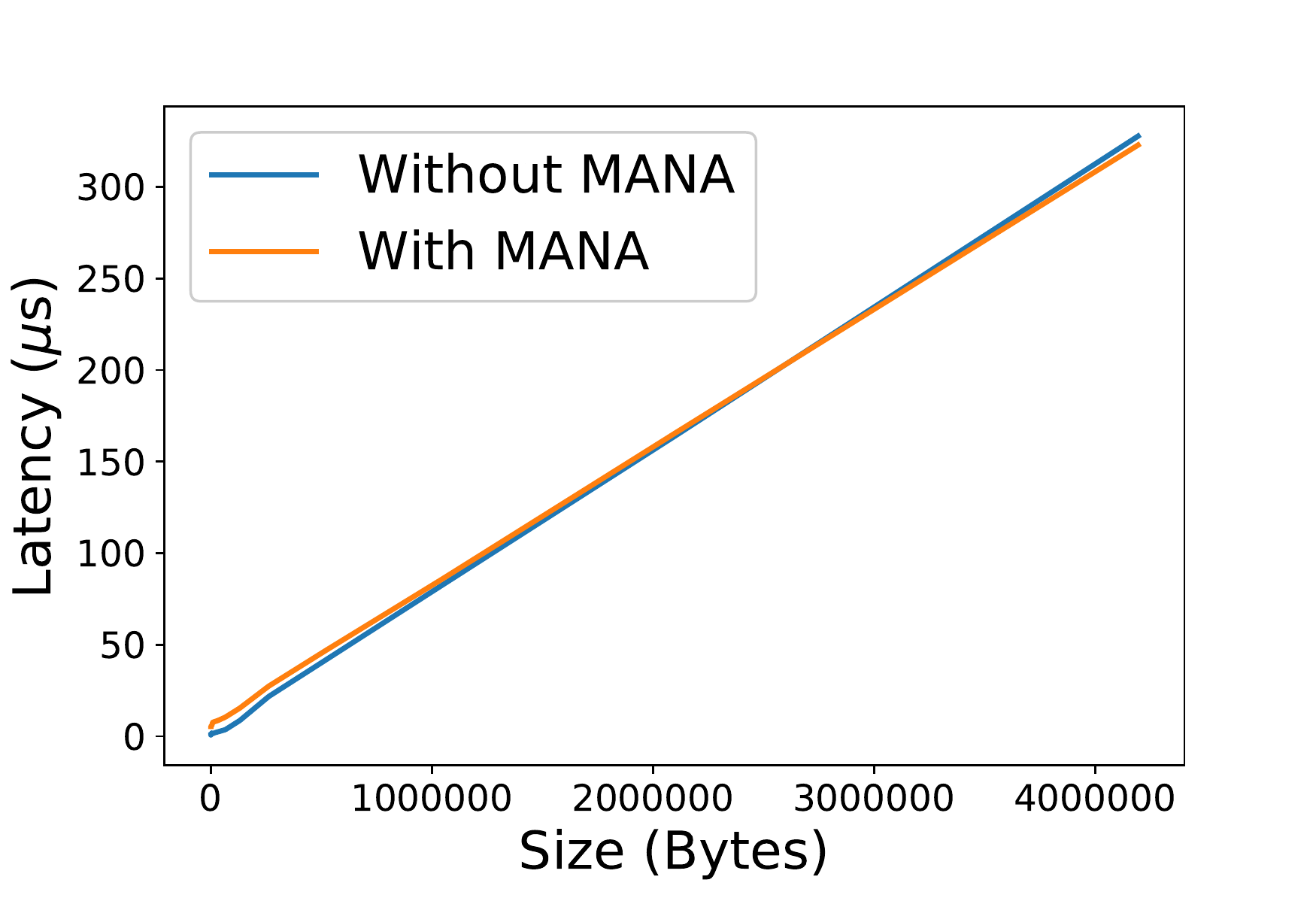}
    \caption{Point-to-Point Latency}
  \end{subfigure}\hfill%
  \begin{subfigure}[t]{0.33\textwidth}
    \includegraphics[scale=0.30]{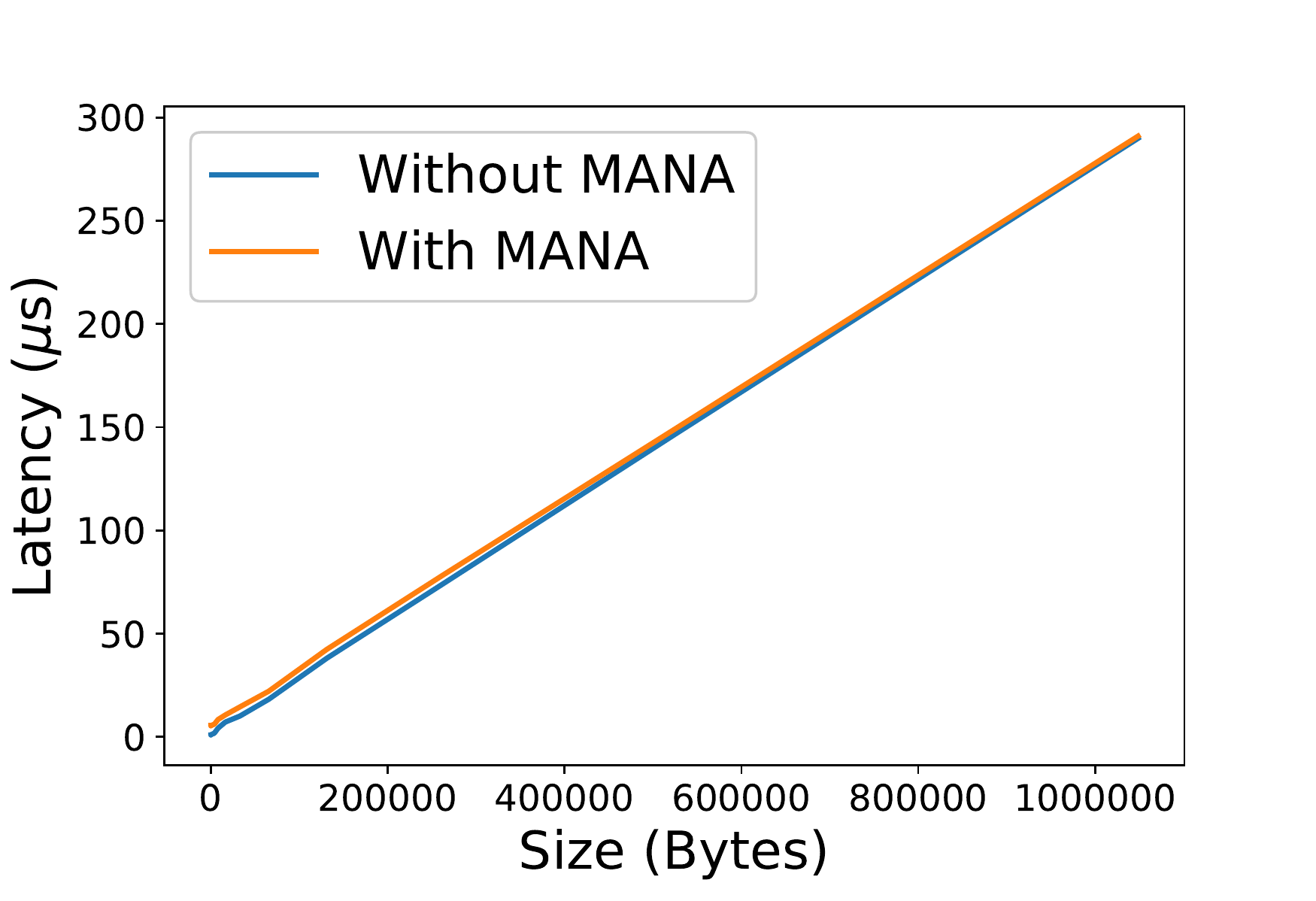}
    \caption{Collective MPI\_Gather}
  \end{subfigure}\hfill%
  \begin{subfigure}[t]{0.33\textwidth}
    \includegraphics[scale=0.30]{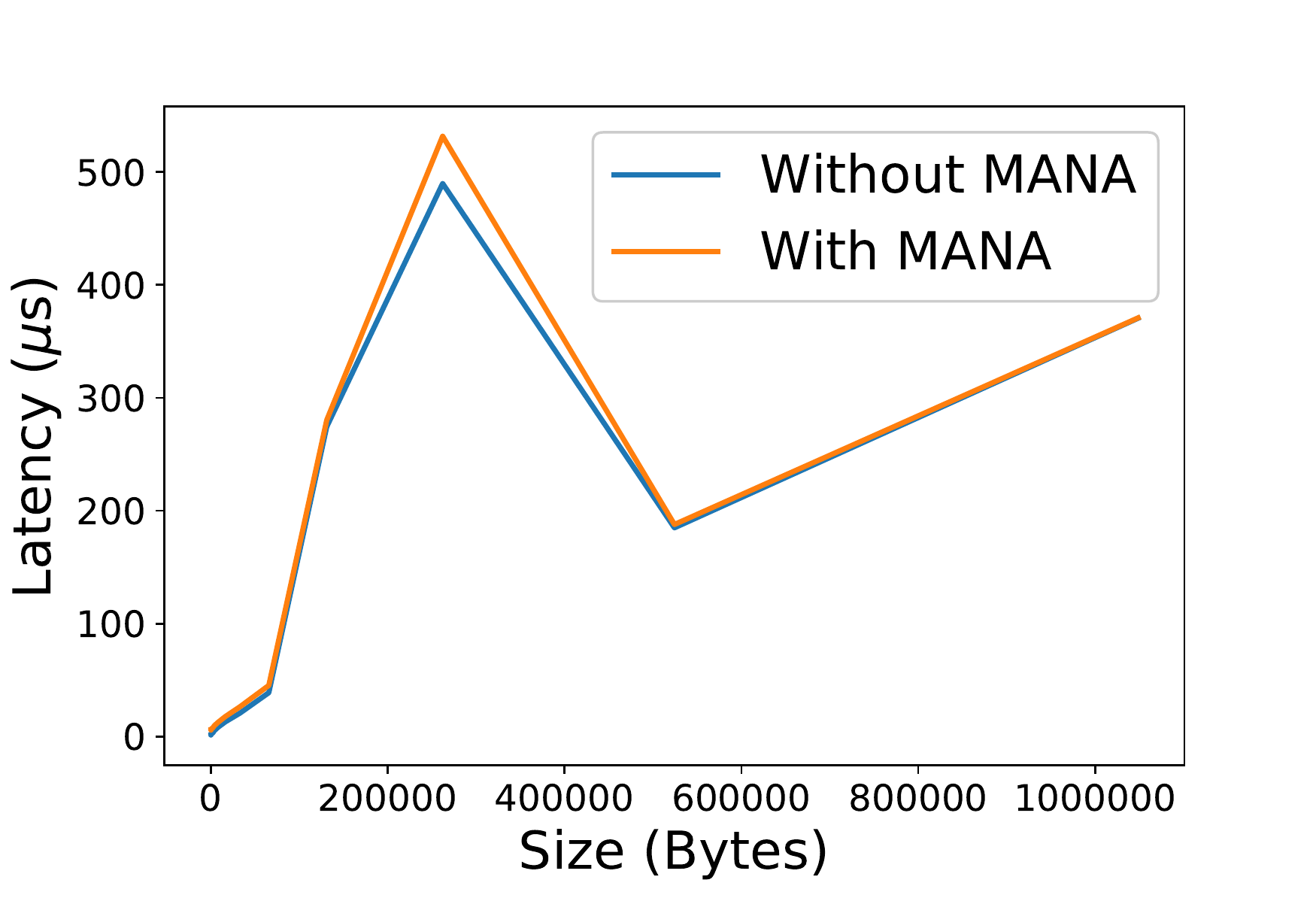}
    \caption{Collective MPI\_Allreduce}
  \end{subfigure}
  \caption{OSU Micro-benchmarks under \sol{}.
  (Results are for two MPI ranks on a single node.)\label{fig:osuResults}}
\end{figure*}

To dig deeper into the sources for the runtime overhead, we tested \sol{}
with the OSU micro-benchmarks. The benchmarks stress and evaluate the bandwidth
and latency of different specific MPI subsystems. Our choice of the specific
micro-benchmarks was motivated by the MPI calls commonly used by our real-world
MPI applications.

Figure~\ref{fig:osuResults} shows the results with three benchmarks from
the OSU micro-benchmark suite. These benchmarks correspond with
the most frequently used MPI subsystems in the set of real-world HPC
applications. The benchmarks were run with 2 MPI ranks running on a single
compute node.

The results show that latency does not suffer under \sol{}, for both
point-to-point and collective communication. (The latency curves for application
running under \sol{} closely follow the curves when the application is run
natively.)

\subsection{Source of Overhead and Improved Overhead for Patched Linux Kernel}
\label{sec:patchedKernel}

All experiments in this section were performed on a single node of
our local cluster, where it was possible to directly install a
patched Linux kernel in the bare machine.

Further investigation revealed two sources of runtime overhead.
The larger source of overhead is due to the use of the ``FS'' register
during transfer of flow of control between the upper and lower half and
back during a call to the MPI library in the lower half.  The ``FS''
register of the x86-64 CPU is used by most compilers to refer to the
thread-local variables declared in the source code.  The upper and lower
half programs each have their own thread-local storage region.  Hence,
when switching between the upper and lower half programs, the value of
the ``FS'' register must be changed to point to the correct thread-local
region.  Most Linux kernels today require a kernel call to invoke
a privileged assembly instruction to get or set the ``FS'' register.
In~2011, Intel Ivy Bridge CPUs introduced a new, unprivileged FSGSBASE
assembly instruction for modifying the ``FS'' register, and a patch
to the Linux kernel~\cite{fsgsbaseLWN} is under review to allow other
Linux programs to use this more efficient mechanism for managing the
``FS'' register. (Other architectures, such as ARM, use unprivileged
addressing modes for thread-local variables that do not depend on
special constructs, such as the x86 segments.)

A second (albeit smaller) source of overhead is the virtualization
of MPI communicators and datatypes, and recording of metadata for MPI
sends and receives. Virtualization requires a hash table lookup and locks
for thread safety.

The first and larger source of overhead is then eliminated by using
the patched Linux kernel, as discussed above.
Point-to-point bandwidth benchmarks were run both with and without the patched
Linux kernel (Figure~\ref{fig:osuP2Pall}).
A degradation in runtime performance is seen for \sol{} for small message
sizes (less than \SI{1}{\mega\byte}) in the case of a native kernel.
However, the figure shows that the patched kernel yields much
reduced runtime overhead for \sol{}.
Note that the Linux kernel community is actively reviewing this
patch (currently in its third version), and it is likely to be
incorporated in future Linux releases.

Finally, we return to GROMACS, since it exhibited a higher
runtime overhead (e.g., \SI{2.1}{\percent} in the case of 16~ranks) in
many cases.
We did a similar experiment, running GROMACS with 16~MPI
ranks on a single node with the patched kernel. With the patched kernel,
the performance degradation was reduced to \SI{0.6}{\percent}.

\subsection{Checkpoint-restart Overhead}
\label{sec:ckptRstOvhd}

\begin{figure}[t!]
  \centering
  \includegraphics[scale=0.5]{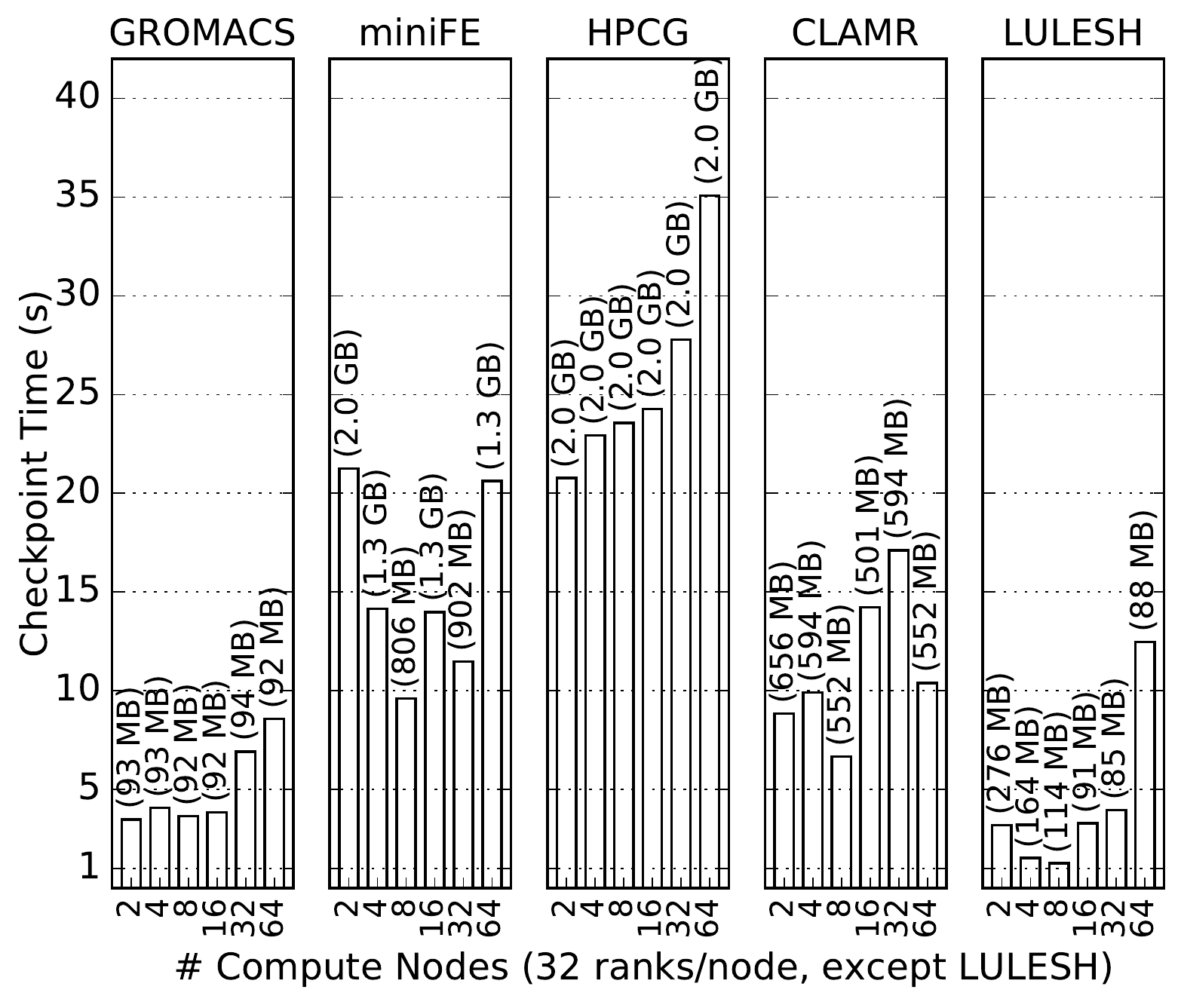}
  \caption{Checkpointing overhead and checkpoint image sizes under \sol{} for
  different real-world HPC benchmarks running on multiple nodes. In all cases,
  except LULESH, 32 MPI ranks were executed on each compute node. For LULESH,
  the total number of ranks was either 64 (for 2, 4, and 8 nodes), or 512 (for
  16, 32, and 64 nodes). Hence, the maximum number of ranks (for 64~nodes)
  was~2048.
  The numbers above the bars (in parentheses) indicate the
  checkpoint image size for each MPI rank.
  \label{fig:ckptTimes}}
  \vspace{-0.5cm}
\end{figure}

\begin{figure}[t!]
  \centering
  \includegraphics[scale=0.5]{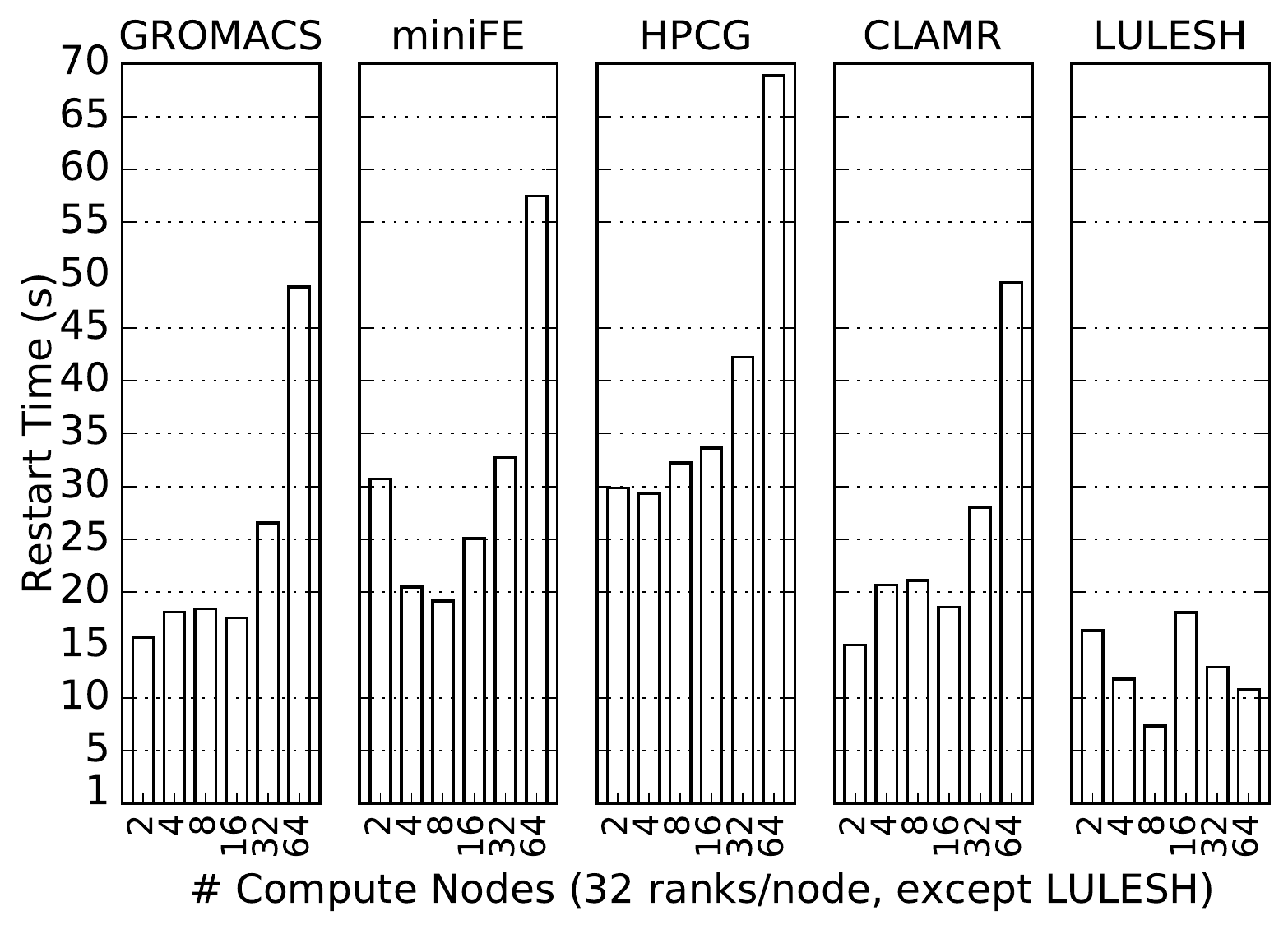}
  \caption{Restart overhead under \sol{} for different real-world HPC
  benchmarks running on multiple nodes. In all cases, except LULESH, 32 MPI
  ranks were executed on each compute node.
  Ranks/node is as in Figure~\ref{fig:ckptTimes}.
  \label{fig:restartTimes}}
  \vspace{-0.5cm}
\end{figure}

In this section, we evaluate \sol{}'s performance when checkpointing and
restarting HPC applications.  Figure~\ref{fig:ckptTimes} shows the
checkpointing overhead for five different real-world HPC applications
running on multiple nodes under \sol{}. Each run was repeated 5 times,
and the mean of five runs is reported. For each run, we use the \texttt{fsync}
system call to ensure the data is flushed to the Lustre backend storage.

The total checkpointing data written at each checkpoint varies from
\SI{5.9}{\giga\byte} (in the case of 64~ranks of GROMACS running over 2~nodes)
to \SI{4}{\tera\byte} (in the case of 2048 ranks of HPCG running over
64~nodes). Note that the checkpointing overhead is proportional
to the total
amount of memory used by the benchmark. This is also reflected in the size
of the checkpoint image per MPI rank. While 
Figure~\ref{fig:ckptTimes} reports the overall checkpoint
time, note that there is significant variation in the
write times for each MPI rank during a given run.
(The time for one rank to write its checkpoint data can be up to
4~times more than that for \SI{90}{\percent} of the other ranks.)
This phenomenon of stragglers
during a parallel write has also been noted by other
researchers~\cite{arya2016design,xie2012characterizing}. Thus, the overall
checkpoint time is bottlenecked by the checkpoint time of the slowest rank.

\begin{figure}[t!]
  \centering
  \includegraphics[scale=0.5]{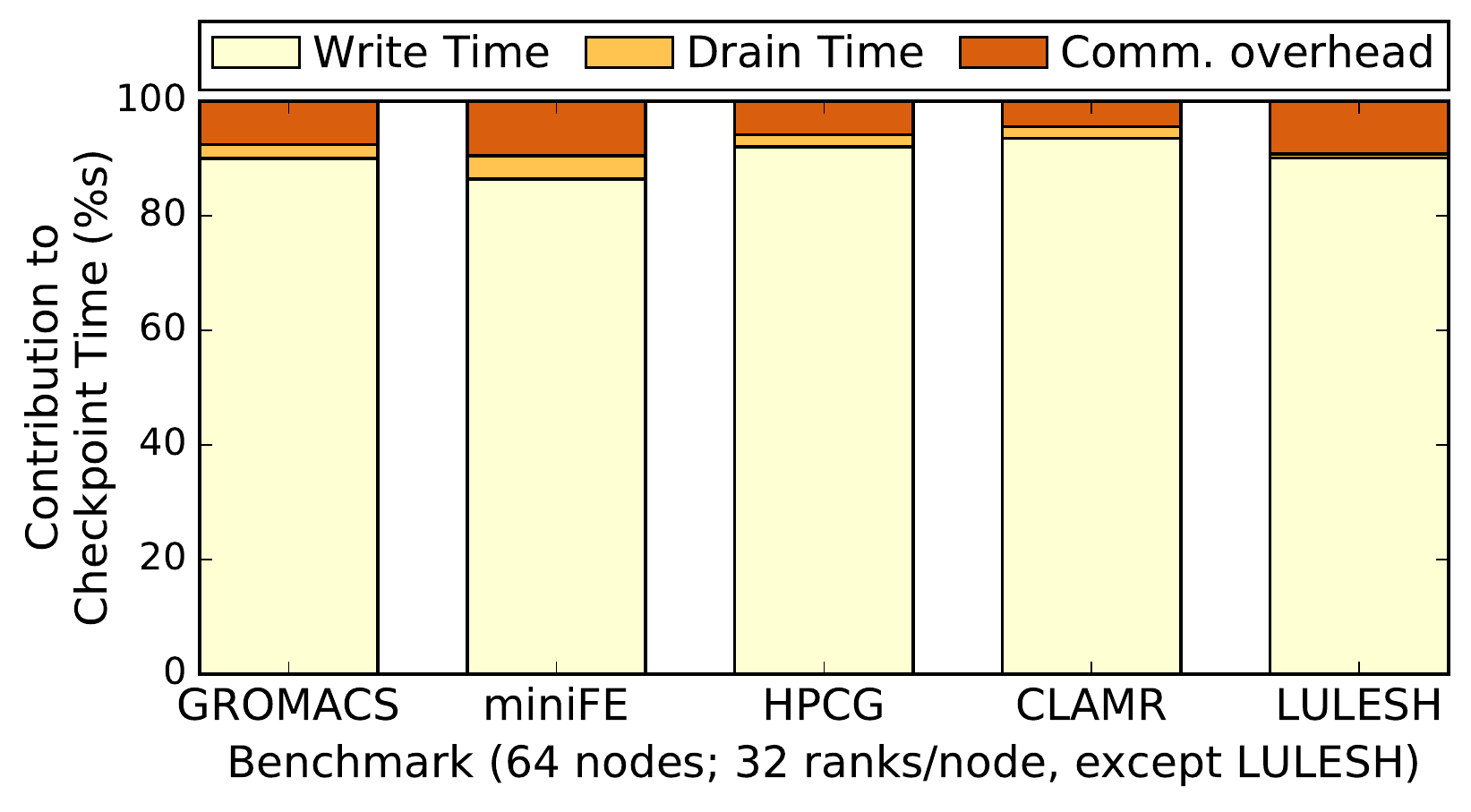}
  \caption{Contribution of different factors to the checkpointing overhead
  under \sol{} for different real-world HPC benchmarks running on 64 nodes.
  Ranks/node is as in Figure~\ref{fig:ckptTimes}.
  The ``drain time'' is the delay in starting a checkpoint
  while MPI message in transit are completed.  The communication overhead
  is the time required in the protocol for network communication between
  the checkpoint coordinator and each rank.
  \label{fig:ckptTimeDivision}}
  \vspace{-0.4cm}
\end{figure}

Next, we ask what are the sources of the checkpointing
overhead? Does the draining of MPI messages and the two-phase algorithm impose
a significant overhead at checkpoint time?

Figure~\ref{fig:ckptTimeDivision} shows the contribution of different
components to the checkpointing overhead for the case of 64~nodes for
the five different benchmarks. In all cases, 
the communication overhead for handling MPI collectives
in the two-phase algorithm of Section~\ref{sec:mpiCollectivesAlgo}
is found to be less than \SI{1.6}{\second}.

In all cases, the time to drain in-flight MPI messages was
less than \SI{0.7}{\second}. The total checkpoint time was
dominated by the time to write to the storage system. The next big source
of checkpointing overhead was the communication overhead. The current
implementation of the checkpointing protocol in DMTCP uses TCP/IP sockets
for communication between the MPI ranks and the centralized DMTCP coordinator.
The communication overhead associated with the TCP layer
is found to increase with the number of ranks,
especially due to metadata in the case of small messages that are exchanged
between MPI ranks and the coordinator.

Finally, Figure~\ref{fig:restartTimes} shows the restart overhead
under \sol{} for the different MPI
benchmarks. The restart time varies from less than \SI{10}{\second}
to \SI{68}{\second} (for 2048 ranks of HPCG running over 64~nodes). The restart
times increase in proportion to the total amount of checkpointing data
that is read from the storage. In all the cases, the restart overhead is
dominated by the time to read the data from the disk. The time to recreate the
MPI opaque identifiers (see Section~\ref{sec:mpiIds}) is less than
\SI{10}{\percent} of the total restart time.

\subsection{Transparent Switching of MPI libraries across Checkpoint-restart}

This section demonstrates that \sol{} can transparently switch between
different MPI implementations across checkpoint-restart. This is
useful for debugging programs (even the MPI library) as it allows a program
to switch from a production version of an MPI library to a debug version of
the MPI library.

The GROMACS application is launched using the production version of
CRAY~MPI, and a checkpoint is taken \SI{55}{\second} into the run.
The computation is then restarted on top of a custom-compiled debug
version of MPICH (for MPICH version~3.3).  MPICH was chosen because it is
a reference implementation whose simplicity makes it easy to instrument
for debugging.

\subsection{Transparent Migration across Clusters}
\label{sec:crossClusterMigration}

\begin{figure}[t!]
  \centering
   \includegraphics[scale=0.35]{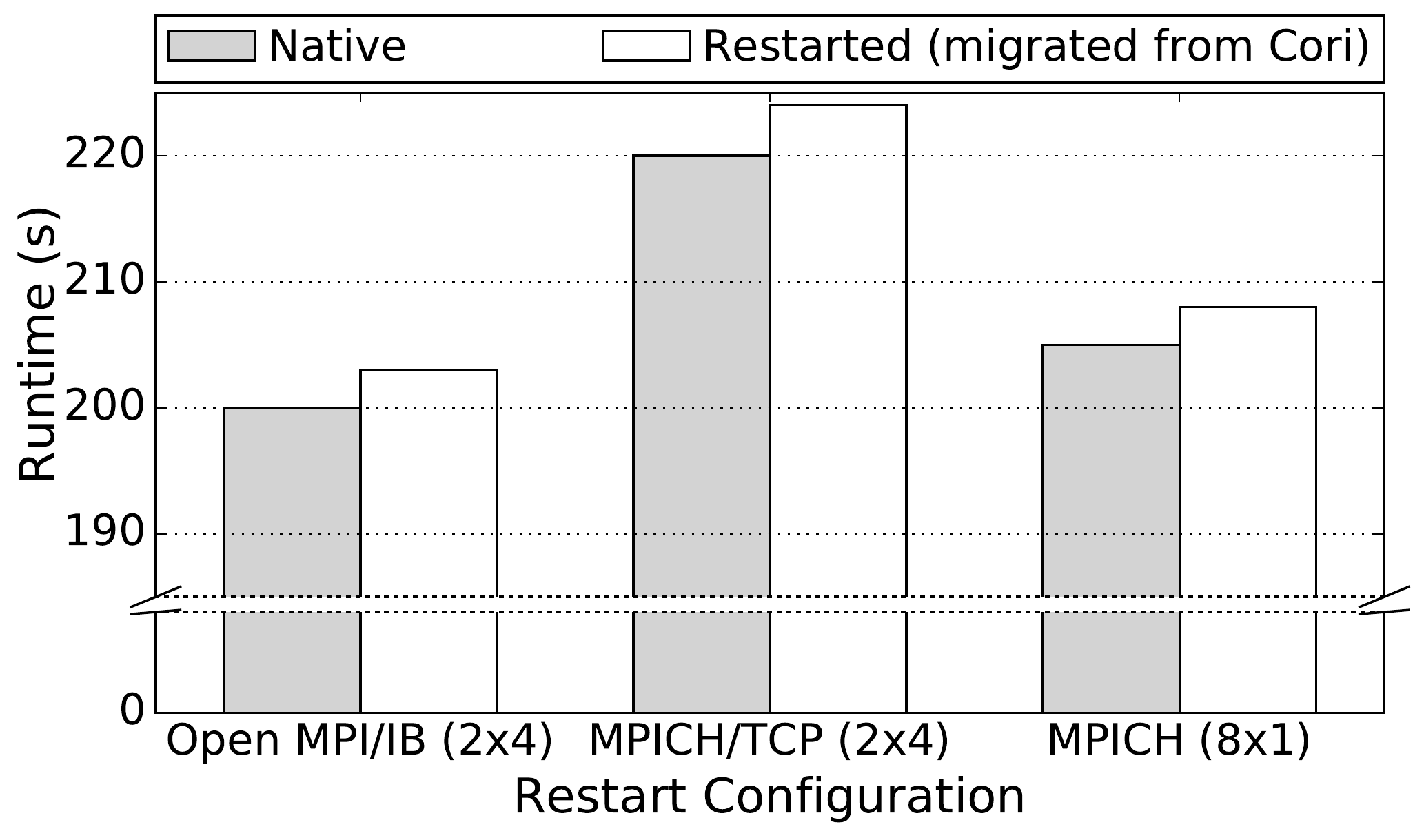}
  \caption{Performance degradation of GROMACS after cross-cluster migration
  under three different restart configurations. The application was restarted
  after being checkpointed at the half-way mark on Cori. (Lower is better.)
	\label{fig:clusterMigration}} 
\end{figure}

Next, we consider cross-cluster migration for purposes of wide-area
load balancing either among clusters at a single HPC site or
even among multiple HPC sites.  This is rarely done, since
the two common vehicles for transparent checkpoint (BLCR as
the base of an MPI-specific checkpoint-restart service; or DMTCP/InfiniBand)
both save the MPI library within the checkpoint image and continue
to use that same MPI library on the remote cluster after migration.
At each site and for each cluster, administrators typically
configure and tune a locally recommended MPI implementation for performance.
Migrating an MPI application {\em along with its underlying MPI library}
destroys the benefit of this local performance tuning.

This experiment showcases the benefits of MPI-agnostic,
network-agnostic support for transparent checkpointing.
GROMACS is run under \sol{},
initially running on Cori with a statically linked Cray~MPI library
running over the Cray Aries network.  GROMACS on Cori is configured to run with
8~ranks over 4~nodes (2~ranks per node).  Each GROMACS rank is
single-threaded.  
A checkpoint was then taken
exactly half way into the run.
The checkpoints were then migrated
to a local cluster that uses Open~MPI over the InfiniBand
network.

The restarted GROMACS under \sol{} was compared with three other
configurations:  GROMACS using the local Open~MPI, configured to use
the local InfiniBand network (8~ranks over 2~nodes); GROMACS/MPICH,
configured to use TCP (8~ranks over 2~nodes); and GROMACS/MPICH, running
on a single node (8~ranks over 1~node). The network-agnostic nature of
\sol{} allowed the Cori version of GROMACS to be restarted on the
local cluster with any of three network options.

We wished to isolate the effects due to \sol{} from the effects due
to different compilers on Cori and the local cluster.  In order to
accomplish this, the native GROMACS on the local cluster was
compiled specially.  The Cray compiler of Cori (using Intel's
C~compiler) was used to generate object files (.o~files) on Cori.
Those object files were copied to the local cluster.  The native
GROMACS was then built using the local \texttt{mpicc}, but with the
(.o~files) as input instead of the (.c~files).  The local \texttt{mpicc}
linked these files with the local MPI implementation, and the
native application was then launched in the traditional way.

Figure~\ref{fig:clusterMigration} shows that GROMACS's performance
degrades by less than 1.8\% post restart on the local
cluster for the three different restart configurations (compared to the
corresponding native runs). Also, note that the performance of GROMACS under
\sol{} post restart closely tracks the performance of the native configuration.

\section{Discussion and Future Work}
\label{sec:futureWork}

Next, we discuss both the limitations and some future implications
of this work concerning dynamic load balancing.

\subsection{Limitations}
While the split-process approach for checkpointing and process migration
is quite flexible, it does include some limitations inherited by any
approach based on transparent checkpointing.  Naturally, when restarting
on a different architecture, the CPU instruction set must be compatible.
In particular, on the x86 architecture, the MPI application code must
be compiled to the oldest x86 sub-architecture among those remote clusters
where one might consider restarting a checkpoint image.  (However, the
MPI libraries themselves may be fully optimized for the local architecture,
since restarting on a remote cluster implies using a new lower half.)

Similarly, while MPI implies a standard API, any local extensions to
MPI must be avoided.  The application {\em binary} interface (ABI)
used by the compiled MPI application must either be compatible or else
a ``shim'' layer of code must be inserted in the wrapper functions for
calling from the upper half to the lower half.

And of course, the use of a checkpoint coordinator implies coordinated
checkpointing.  If a single MPI rank crashes, \sol{} must restore the
entire MPI computation from an earlier checkpoint.

\subsection{Future Work}

MPI version~3 has added nonblocking collective communication calls
(e.g., MPI\_Igather).  In future work, we propose to extend the two-phase
algorithm for collective communication of Section~\ref{sec:mpiCollectivesAlgo} to the
nonblocking case.  The approach to be explored would be to employ a
first phase that uses a nonblocking trivial barrier (MPI\_Ibarrier), and
to then convert the actual asynchronous collective call to a synchronous
collective call (e.g., MPI\_Gather to MPI\_Igather) for the second phase.
Nonblocking variations of collective communication calls are typically
used as performance optimizations in an MPI application.  If an MPI rank
reaches the collective communication early, then instead of blocking,
it can continue with an alternate compute task while occasionally
testing (via MPI\_Test/MPI\_Wait) to see if the other ranks have all
reached the barrier.  In the two-phase analog, a wrapper around the
nonblocking collective communication causes MPI\_Ibarrier to be invoked.
When the ranks have all reached the nonblocking trivial barrier and
the MPI\_Test/MPI\_Wait calls of the original MPI application reports
completion of the MPI\_Ibarrier call of phase~1, then this implies that
the ranks are all ready to enter the actual collective call of phase~2.
A wrapper around MPI\_Test/MPI\_Wait can then invoke the actual collective
call of phase~2.

The split-process approach of \sol{} opens up some
important new features in managing long-running MPI applications.
An immediately obvious feature is the possibility of switching
{\em in the middle of a long run} to a customized MPI implementation.
Hence, one can dynamically substitute a customized MPI for performance
analysis (e.g., using PMPI for profiling or tracing); or using a specially
compiled ``debug'' version of MPI to analyze a particular but occurring
in the MPI library in the middle of a long run.

This work also helps support many tools and proposals for optimizing
MPI applications.  For example, a trace analyzer is sometimes used
to discover communication hotspots and opportunities for better load
balancing.  Such results are then fed back by re-configuring the
binding of
MPI ranks to specific hosts in order to better fit the underlying interconnect topology.

\sol{} can enable new approaches to dynamically load balance across clusters
and also to re-bind MPI ranks in the middle of a long run to create new
configurations of rank-to-host bindings (new topology mappings). Currently,
such bindings are chosen statically and used for the entire lifetime of the
MPI application run. This added flexibility allows system managers to burst
current long-running applications into the Cloud during periods of heavy usage
or when the the MPI application enters a new phase for which a different
rank-to-host binding is optimal.

Finally, \sol{} can enable a new class of very long-running MPI applications ---
ones which may outlive the lifespan of the original MPI Implementation, 
cluster, or even the network interconnect.  Such temporally complex computations
might be discarded as infeasible today without the ability to migrate MPI
implementations or clusters.

\section{Related Work}
\label{sec:relatedWork}

Hursey et al.~\cite{hursey2009interconnect} developed a semi-network-agnostic
checkpoint service for Open-MPI. It applied an ``MPI Message''
abstraction to a Chandy/Lamport algorithm~\cite{ChandyLamport}, greatly
reducing the complexity to support checkpoint/restart for many multiple
network interconnects.  However, it also highlighted the weakness of
implementing transparent checkpointing within the MPI library, since
porting to an additional MPI implementation would likely require
as much software development as for the first MPI implementation.
Additionally, its dependence on BLCR imposed a large overhead cost,
as it lacks support for SysV shared memory.

Separate proxy processes for high- and low-level operations
have been proposed both by CRUM (for CUDA) and McKernel (for
the Linux kernel).
CRUM~\cite{garg2018crum}\ showed that by running a non-reentrant
library in a separate process,
one can work around the problem of a library
``polluting'' the address space of the application process --- i.e.,
creating and leaving side-effects in the application process's
address space. This decomposition of a single application process
into two processes, however, forces the transfer of data between two
processes via RPC, which can cause a large overhead.

McKernel~\cite{gerofi2016scalability} runs a ``lightweight'' kernel along
with a full-fledged Linux kernel. The HPC application runs on the
lightweight kernel, which implements time-critical
system calls. The rest of the functionality is offloaded to a proxy process
running on the Linux kernel. The proxy process is mapped in the address
space of the main application, similar to \sol{}'s concept of
a lower half, to minimize the overhead of ``call forwarding'' (argument
marshalling/un-marshalling).

In general, a proxy process approach is problematic for MPI, since
it can lead to additional jitter as the operating system tries to
schedule the extra proxy process alongside the application
process.  The jitter harms performance since the MPI computation is
constrained to complete no faster than its slowest MPI rank.

Process-in-process~\cite{hori2018process} has in common with
\sol{} that both approaches load multiple programs into a single address
space.  However, the goal of process-in-process was intra-node communication
optimization, and not checkpoint-restart.
Process-in-process loads {\em all} MPI ranks co-located on the same node
as separate threads within a single process, but
in different logical ``namespaces'', in the sense of the \texttt{dlmopen}
namespaces in Linux.
It would be difficult to adapt process-in-process for use in checkpoint-restart
since that approach implies a single ``ld.so'' run-time linker library
that managed all of the MPI ranks.  In particular, difficulties occur
when restarting with fresh MPI libraries while ``ld.so'' retains pointers
to destructor functions in the pre-checkpoint MPI libraries.

In the special regime of application-specific checkpointing
for bulk synchronous MPI applications,
Sultana et al.~\cite{sultana2018mpi} supported checkpointing
by separately saving and restoring MPI state (MPI
identifiers such as communicators, and so on).
This is combined with application-specific code to save the application
state.
Thus, when a live process fails,
it is restored using these two components,
without the need restart the overall MPI job.

SCR~\cite{scrWebsite}, and FTI~\cite{bautista2011fti} are
other application-specific checkpointing techniques.
An application developer declares memory regions they'd like to checkpoint
and checkpointing can only be done at specific points in the program
determined by the application developer. Combining these techniques with
transparent checkpointing is outside the scope of this work, though it is
an interesting avenue for further inquiry.

In general, application-specific and transparent checkpointing each have their
merits.
Both application-specific and transparent checkpointing are used in practice.

At the high end of HPC, application-specific checkpointing is preferred
since the labor for supporting this is small compared to the labor already
invested in supporting an extreme HPC application.

At the low and medium end of HPC, developers prefer transparent checkpointing
because the development effort for the software is more moderate, and the
labor overhead of a specialized application-specific checkpointing solution
would then be significant. System architectures based on burst buffers (e.g., Cray's
DataWarp~\cite{henseler2016datawarp}) can be used to reduce the checkpointing
overhead for both application-specific and transparent checkpointing.

\section{Conclusion}
\label{sec:conclusion}

This work presents an MPI-Agnostic, Network-Agnostic transparent
checkpointing methodology for MPI (\sol{}), based on a {\em split-process}
mechanism.
The runtime overhead is typically less than~2\%,
even in spite of the overhead incurred by
the current Linux kernel when the ``FS'' register is modified each time control
passes between upper and lower half.
Further, Section~\ref{sec:patchedKernel} shows that a
commit (patch) to fix this by the Linux kernel developers is under review
and that this commit reduces the runtime overhead of GROMACS from 2.1\%
to 0.6\% using the patched kernel.  Similar reductions to about
0.6\% runtime overhead are expected in the general case.

An additional major novelty is the demonstration of practical, efficient
migration between clusters at different sites using different networks
and different configurations of CPU cores per node.  This was
considered impractical in the past because a checkpoint image from
one cluster will not be tuned for optimal performance on the second cluster.
Section~\ref{sec:crossClusterMigration} demonstrates that this is now
feasible, and that the migration of
a GROMACS job with 8~MPI ranks experiences an
average runtime overhead of less than 1.8\% as compared
to the native GROMACS application (without \sol{}) on the remote cluster.
As before, even this overhead of 1.8\% is likely to be
reduced to about 0.6\% in the future, based on the results of
Section~\ref{sec:patchedKernel} with a patched Linux kernel.

\section*{Acknowledgment}

We thank Zhengji Zhao and Rebecca-Hartman Baker from NERSC for the
resources and feedback on an earlier version of the software.
We also thank Twinkle Jain for discussions and insights into
an earlier version of this work.

\end{document}